\renewcommand{\qed}{\hfill$\square$}
\newcommand{\bqed}{\hfill$\blacksquare$}
\newcommand{\TK}{\mathcal{T}_K}
\newcommand{\Can}{{\bf Can}}
\newenvironment{proof*}[1]
  {%
   \begin{proof}}
  {\end{proof}}
\begin{document}

\title{Semidirect Product Decompositions for \\Periodic Regular Languages}
\titlerunning{Semidirect Product Decompositions for Periodic Regular Languages}
%
%
\author{Yusuke Inoue \and
Kenji Hashimoto\and
Hiroyuki Seki}
\authorrunning{Y. Inoue et al.}
%
\institute{Nagoya University}
\maketitle              

\begin{abstract}
  The definition of period in finite-state Markov chains can be extended to regular languages by considering the transitions of DFAs accepting them. For example, the language $(\Sigma\Sigma)^*$ has period two because the length of a recursion (cycle) in its DFA must be even.
This paper shows that the period of a regular language appears as a cyclic group within its syntactic monoid. Specifically, we show that a regular language has period $p$ if and only if its syntactic monoid is isomorphic to a submonoid of a semidirect product between a specific finite monoid and the cyclic group of order $p$.
Moreover, we explore the relation between the structure of Markov chains and our result, and apply this relation to the theory of probabilities of languages. We also discuss the Krohn-Rhodes decomposition of finite semigroups, which is strongly linked to our methods.
\end{abstract}

\section{Introduction}
Numerous algebraic approaches to formal languages have been conducted.
For example, the algebraic characterization of star-free languages by Sch\"{u}tzenberger \cite{Schutzenberger} is one of the most famous results of such studies. Most of algebraic approaches to regular languages over an alphabet $\Sigma$ are based on the fact that a language $L\subseteq\Sigma^*$ is regular iff there exist a finite monoid $M$, a monoid homomorphism $\eta:\Sigma^*\to M$ and a subset $S\subseteq M$ such that $L=\eta^{-1}(S)$.
The smallest monoid $M$ that satisfies this condition is called the syntactic monoid of $L$, and the corresponding $\eta$ is called the syntactic morphism of $L$.
The relation between a regular language $L$ and its syntactic monoid $M$ is described by a Cayley graph as follows.
The Cayley graph of the syntactic monoid $M$ is the directed labeled graph such that the set of vertices is $M$ and the set of edges consists of $m_1\xrightarrow{a}m_2$ where $m_1\cdot \eta(a)=m_2$ with $m_1,m_2\in M$ and $a\in \Sigma$. Then, the Cayley graph can be regarded as the DFA $\mathcal{A}$ recognizing $L$ defined as:
$\mathcal{A}$ accepts $w\in \Sigma^*$ iff there is a path $e_M\xrightarrow{w}m$ where $e_M$ is the identity element of $M$ and $m$ is contained in the image $\eta(L)$.

In this paper, we focus on periods of regular languages. We say that a regular language $L\subseteq \Sigma^*$ has a period $P\ge 1$ with respect to $\Gamma\subseteq \Sigma$ if the Cayley graph of the syntactic monoid of $L$ satisfies: if $m\xrightarrow{w}m$ is a path with some $m\in M$, then the number of occurrences of letters in $\Gamma$ in $w$ is a multiple of $P$.
{(The reason for focusing on Cayley graphs will be explained in Section~2.3.)}
For example, the syntactic monoid of $(\Sigma\Sigma)^*$ is the cyclic group $C_2=\{0,1\}$ and the set of edges of the Cayley graph is $\{0\xrightarrow{a}1,1\xrightarrow{a}0\mid a\in\Sigma\}$.
Therefore, $(\Sigma\Sigma)^*$ has period $2$ with respect to $\Sigma$ because $w$ must be of even length if $m\xrightarrow{w}m$ is a path for each $m\in\{0,1\}$ and $w\in\Sigma^*$.
Note that a period of a language $L\subseteq\Sigma^*$ is defined for any given non-empty subset $\Gamma\subseteq \Sigma$.
As another example, we consider the language $L_1=\{w\mid$ the numbers of occurrences of $a$ and $b$ in $w$ are both even$\}\subseteq \{a,b\}^*$.
Then, the syntactic monoid of $L_1$ is the direct product $C_2\times C_2$ of two cyclic groups $C_2$. One $C_2$ counts the occurrences of $a$ and the other $C_2$ counts the occurrences of $b$.
Therefore, $L_1$ has period $2$ with respect to both of $\Gamma_1=\{a\}$ and $\Gamma_2=\{b\}$.

Now, how are these periods represented by syntactic monoids?
In the examples mentioned in the last paragraph, each period is simply represented by the corresponding syntactic monoid.
In the first example, the period of $(\Sigma\Sigma)^*$ with respect to $\Sigma$ is $2$, and it is represented by the cyclic group $C_2$.
In the second example, the periods of $L_1$ with respect to $\Gamma_1$ and $\Gamma_2$ are represented by two cyclic groups $C_2$, respectively.
However, a general case is not always as simple as these examples.
As we will see in later sections, there is a language that has period $2$, and its syntactic monoid is the symmetry group $\mathcal{S}_3$, which consists of all permutations on $\{0,1,2\}$.
Of course, $\mathcal{S}_3$ is not isomorphic to cyclic groups or their direct products. How does the syntactic monoid explain the period in such a case?

{
One of our goals is to provide an algebraic decomposition of the syntactic monoid of a periodic language.
As the main result of this paper, we show that every syntactic monoid of a regular language with periods $P_1,\ldots,P_n>1$ with respect to $\Gamma_1,\ldots,\Gamma_n\subseteq \Sigma$ can be decomposed into a submonoid of the semidirect product $N^{G}\rtimes G$ where $N$ is a specific finite monoid and $G$ is the direct product of cyclic groups of orders $P_1,\ldots,P_n$.
The second component $G$ of the decomposition clearly contains the period $P_i$ as the $i$-th cyclic group for each $1\le i\le n$, and the first component $N^G$ simulates the behavior of each periodic class of the syntactic monoid with states of which number is smaller than its order.}
The contributions of our results and related studies are as follows.

\begin{enumerate}[(i)]
  \item Our decomposition explains an iteration property of regular languages.
  Our definition of period is inspired by period in the context of Markov chains. For an irreducible Markov chain, its period is defined as the greatest common divisor of the lengths of all recursions (cycles) whose probabilities are positive. Period is a crucial concept in the analysis of Markov chains (see e.g., \cite{Norris}). For example, the existence of the limit distribution of a Markov chain depends on its period. We extend the concept of period in Markov chains to regular languages by considering the transitions of DFA accepting them. An essential idea of the extension is using syntactic monoids rather than minimal DFAs or other DFAs. Our study highlights several advantages of syntactic monoids in explaining a specific iteration property of regular languages.

  \item Periods of Markov chains are strongly related to the probability of regular languages, and therefore, our results have applications in the study of the probability of regular languages. The probability $\mu_L(\ell)$ of a language $L$ for length $\ell$ is the probability that $w$ belongs to $L$ when a word $w\in \Sigma^{\ell}$ is randomly chosen.
  For example, the probability of $L=a\Sigma^*$ where $\Sigma=\{a,b\}$ is $\tfrac{1}{2}$ for every length $\ell\ge 1$.
  Probabilities of languages have been studied in many different contexts. The most classical results on probabilities were obtained as an application of formal power series \cite{powerseries}.
  In recent years, there have been several approaches to probabilities using syntactic monoids (e.g., \cite{zero-one,LTmeasure}), and our study contributes to this body of work.

  \item Our decomposition theorem provides a partial Krohn-Rhodes decomposition of the transformation semigroup $(M_L,M_L)$ where $M_L$ is the syntactic monoid of a periodic regular language $L$.
  A transformation semigroup $(X, M)$ is the pair of a set $X$ and a semigroup $M$ that acts on $X$.
  The Krohn-Rhodes prime decomposition theorem \cite{Krohn-Rhodes} states that every finite transformation semigroup can be decomposed into a wreath product of finite {monoids having only trivial subgroups\footnote{Such monoids are usually said to be {\it aperiodic}, but we don't use this term to avoid confusion with period of a language, the key concept of this paper.}} and finite groups.
  This is a famous result in semigroup theory, and several related studies have been conducted (e.g., \cite{groupComplexity,localDivisor,q-theory}).
  Because the wreath product of two semigroups $N$ and $G$ is defined as the transformation semigroup $(N\times G,N^G\rtimes G)$,
  our decomposition of the form $N^G\rtimes G$ with a group $G$ provides a partial Krohn-Rhodes decomposition of syntactic monoids.
  Note that DFAs can be regarded as transformation semigroups, and therefore, the Krohn-Rhodes decomposition has been studied in the context of formal language theory.
  In particular, the holonomy decomposition is known as a decomposition of finite state automata \cite{Eilenberg,Cascade}.
  However, the holonomy decomposition is a decomposition of automata, not a decomposition of syntactic monoids.
\end{enumerate}

The remaining sections of this paper are organized as follows.
Section~2 provides basic definitions of monoids, languages, and periods. In addition, we provide some examples of periodic regular languages that will be discussed in the later sections.
In Section~3, we first prove the main theorem (Theorem~\ref{th-main}). In the latter part of the section, we focus on periods with respect to a given alphabet $\Sigma$ and explain that the syntactic monoid can be decomposed into monoids corresponding to each residual of the period (Theorem~\ref{th-residualMonoid}).
In Section~4, we discuss the applications (i) and (ii) described above in detail. Specifically, we discuss the connection between the periods of languages we defined and the periods in the context of Markov chains (Theorem~\ref{th-maxperiod}). Moreover, we extend the characterization of zero-one languages presented in \cite{zero-one} by applying our results (Theorem~\ref{th-ex-zeroone}).
In section~5, we consider the wreath products described in the application (iii), and provide a partial Krohn-Rhodes decomposition of periodic regular languages (Theorem~\ref{th-main2}).

\section{Preliminaries}
Let $|X|$ denote the cardinality of a set $X$. For sets $X$ and $Y$, let $X\sqcup Y$ and $Y^X$ denote the disjoint union of $X$ and $Y$, and the set of all functions from $X$ to $Y$, respectively. A function from $X$ to itself is called a {\it transformation} on $X$. For a positive integer $K$, $K$ is sometimes regarded as the set $\{0,\ldots,K-1\}$.

\subsection{Monoids}
A {\it monoid} is a set $M$ equipped with an associative binary operation $\cdot:M\times M\to M$, and containing the identity element $e_M\in M$. If the monoid $M$ is clear from the context, $e_M$ is simply denoted as $e$.
For monoids $M$ and $N$, we say that $h:M\to N$ is a monoid homomorphism if $h$ satisfies: (i) $h(e_M)=e_N$, and (ii) $h(m_1\cdot m_2)=h(m_1)\cdot h(m_2)$ for each $m_1,m_2\in M$.
We say that a subset $N'\subseteq N$ is a submonoid of $N$ if $N'$ also forms a monoid.
Note that for every monoid homomorphism $h:M\to N$, the homomorphic image $h(M)$ is a submonoid of $N$.
Moreover, $M$ is isomorphic to the submonoid $h(M)$ if $h$ is injective.
Therefore, we also say that $M$ is a submonoid of $N$ if there exists an injective monoid homomorphism $h:M\to N$.
In this case, we identify $M$ with $h(M)$, and each $m\in M$ with $h(m)\in N$ if the embedding $h:M\to N$ is clear from the context.
We say that $N$ is a {\it quotient} of $M$ if there exists a surjective monoid homomorphism $\psi:M\to N$.
Also, $N$ is a {\it divisor} of $M$ if $N$ is a quotient of a submonoid of $M$.
\begin{example}\label{ex-basic_monoids}
  Let $K$ be a positive integer.
  The followings are examples of monoids used in this paper. \begin{itemize}
  \item Let $U_K=\{e,\iota_1,\ldots,\iota_K\}$ be the monoid such that all of non-identity elements are right-zero. That is, $s\cdot \iota_i=\iota_i$ for each $1\le i\le K$ and $s\in U_K$.
  \item Let $\overline{U}_K=\{e,\iota_1,\ldots,\iota_K\}$ be the monoid such that all of non-identity elements are left-zero.
  \item Let $C_K=\{0,\ldots,K-1\}$ be the cyclic group of order $K$. That is, $i\cdot j=i+j\!\!\mod K$ for each $i,j\in C_K$. We write $+$ for the operation of $C_K$.
  \item Let $\mathcal{S}_K$ be the symmetric group of order $K!$. That is, $\mathcal{S}_K$ consists of all permutations on $K$, and the operation $\cdot$ is the composition of functions.
  \item Let $\mathcal{T}_K$ be the monoid such that the carrier set consists of all transformations on $K$, and the operation is the composition of functions.\bqed
\end{itemize}
\end{example}
Note that any finite monoid is a submonoid of $\TK$ for some integer $K$. This is because every element $m$ in a monoid $M$ can be regarded as the transformation $\tau:M\to M$ such that $\tau(s)=s\cdot m$ for each $s\in M$.

Let $M$ be a finite monoid and $S\subseteq M$ be a generator of $M$. We say that $(V,E)$ is the {\it Cayley graph} of $M$ where $V=M$ is the set of vertices and $E=\{(m_1,s,m_2)\in M\times S\times M\mid m_1\cdot s=m_2  \}$ is the set of edges labeled by $S$.

\subsection{Semidirect Products}
Let $X$ be a set and $M$ be a monoid. A {\it left action} on $X$ from $M$ is a function $*:M\times X\to X$ satisfying: (i) $e_M* x=x$, and (ii) $m_1*(m_2*x)=(m_1\cdot m_2)*x$ for each $m_1,m_2\in M$ and $x\in X$. A {\it right action} is defined as the dual of a left action.
In particular, the operation of $M$ is a left (or right) action on $M$ from $M$. When considering an action on $M$ from $M$, the action refers to the operation of $M$ unless otherwise specified.

For monoids $M$ and $N$, a left action $*$ on $M$ from $N$ is {\it distributive} if $n*(m_1\cdot m_2)=(n*m_1)\cdot(n*m_2)$ for all $n\in N$ and $m_1,m_2\in M$.
In this paper, all actions on monoids are supposed to be distributive.
A left action $*$ on $M$ from $N$ is {\it unitary} if $n* e_M=e_M$ for all $n\in N$.

Let $\otimes$ and $\oplus$ be the operations of monoids $M$ and $N$, respectively. For a distributive left action $*:N\times M\to M$, the monoid with operation $\cdot$ on $M\times N$ defined as\begin{equation}
(m_1,n_1)\cdot (m_2,n_2) = (m_1\otimes(n_1*m_2),n_1\oplus n_2)\label{eq:semidirect*1}
\end{equation}
for each $(m_1,n_1),(m_2,n_2)\in M\times N$
is called the {\it semidirect product} (with respect to $*$) of $M$ and $N$, and denoted by $M\rtimes_* N$. By the distributivity of the action $*$, the operation $\cdot$ of $M\rtimes_* N$ is associative, and
\begin{equation}\label{eq:closedform}
(m_1, n_1) \cdot (m_2, n_2) \cdot \cdots \cdot (m_k, n_k)
   = (  \bigotimes_{1 \le i \le k} (\bigoplus_{1 \le j < i} {n_j}) * m_i ,  \bigoplus_{1 \le i \le k} {n_i} )
\end{equation}
for each $(m_1, n_1) ,\ldots ,(m_k, n_k)\in M\times N$.

\begin{example}\label{ex-semidirect}
  \begin{itemize}
    \item The direct product $M\times N$ of any two monoids $M$ and $N$ is the semidirect product with respect to the {\it trivial action} $*$ where $*$ is defined as $n* m=m$ for each $n\in N,m\in M$.
    See the closed form \eqref{eq:closedform}, whose right-hand becomes $(\bigotimes_{1 \le i \le k}  m_i ,  \bigoplus_{1 \le i \le k} {n_i})$ when $n*m=m$.
    That is, each element $m\in M$ is not affected by any $n\in N$ in this action~$*$.
    \item The symmetric group $\mathcal{S}_3$ is isomorphic to $C_3\rtimes_{*}C_2$ where $*$ is the unitary action defined as $0* m=m$ and $1*m=-m$ for each $m\in C_3$.\bqed
  \end{itemize}
\end{example}

Let $M$ be a monoid with an operation $\otimes$, and $*:Y\times N\to Y$ be a right action on a set $Y$ from a monoid $N$. Let $M^Y$ be the monoid defined as $(f\cdot g)(y)=f(y)\otimes g(y)$ for each $f,g\in M^Y$.
Then, the left action $\circledast:N\times M^Y\to M^Y$ is induced as \begin{equation}
(n\circledast f)(y)=f(y*n) \label{eq:act-function*2}
\end{equation}
for each $n\in N,y\in Y$ and $f\in M^Y$.
That is, $\circledast$ is a pointwise extension of $*$ from $M$ to $M^Y$.
We let $M^Y\rtimes N$ denote the semidirect product of $M^Y$ and $N$ with respect to this action $\circledast$.
Note that when considering the case $Y=N$ (i.e., considering $M^N\rtimes N$), the action $*:Y\times N\to Y$ refers to the operation on $N$.
This special semidirect product is remarkable because of the following known fact.

\begin{proposition}\label{prop-semidirect}
  Let $M$ and $N$ be monoids.
  The semidirect product $M\rtimes_* N$ is a submonoid of $M^N\rtimes N$ for every unitary left action $*:N\times M\to M$. In particular, $M\times N$ is a submonoid of $M^N\rtimes N$.
\end{proposition}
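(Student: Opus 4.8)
The plan is to exhibit an explicit injective monoid homomorphism $\Phi \colon M \rtimes_* N \to M^N \rtimes N$ and then invoke the paper's convention that the existence of such a map witnesses $M \rtimes_* N$ being a submonoid of $M^N \rtimes N$. For each $m \in M$ I would introduce the function $f_m \in M^N$ defined by $f_m(y) = y * m$ for all $y \in N$, which records how the left action $*$ transforms $m$ under every element of $N$. The candidate embedding is then
\begin{equation*}
\Phi(m,n) = (f_m,\, n) \qquad \text{for each } (m,n) \in M \times N.
\end{equation*}
The intuition is that $f_m$ stores in a single function all the ``shifted copies'' of $m$ that the twisted product of $M \rtimes_* N$ can create, so that the pointwise multiplication in $M^N$ together with the shift action $\circledast$ reproduces that twisted product.

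First I would check that $\Phi$ is a homomorphism. Since $Y = N$ forces the right action used to build $M^N \rtimes N$ to be the operation $\oplus$ of $N$, the induced action is $(n \circledast f)(y) = f(y \oplus n)$, and the product in $M^N \rtimes N$ reads $(f_1,n_1)\cdot(f_2,n_2) = (f_1 \cdot (n_1 \circledast f_2),\, n_1 \oplus n_2)$. Evaluating the first coordinate of $\Phi(m_1,n_1)\cdot\Phi(m_2,n_2)$ at $y$ gives $f_{m_1}(y) \otimes f_{m_2}(y \oplus n_1) = (y*m_1)\otimes((y\oplus n_1)*m_2)$. On the other hand, the product $(m_1,n_1)\cdot(m_2,n_2) = (m_1 \otimes (n_1 * m_2),\, n_1 \oplus n_2)$ is sent by $\Phi$ to a pair whose first coordinate, at $y$, is $y*(m_1 \otimes (n_1 * m_2))$. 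The two agree: distributivity of $*$ splits the latter as $(y*m_1)\otimes(y*(n_1*m_2))$, and the left-action axiom $y*(n_1*m_2) = (y\oplus n_1)*m_2$ completes the match. This single coordinate-wise identity is the only genuine computation, and I expect it to be the main (though routine) obstacle: the care lies in invoking the distributivity and composition axioms of $*$ in the right order and in not conflating the two roles of $*$, namely the action $N\times M\to M$ versus the operation $\oplus$ of $N$ hidden inside $\circledast$.

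It then remains to verify that $\Phi$ preserves identities and is injective, and to deduce the final sentence. For the identity, $\Phi(e_M,e_N) = (f_{e_M}, e_N)$ with $f_{e_M}(y) = y * e_M$; this equals the constant function $e_M$, the identity of $M^N$, precisely because the action is \emph{unitary}, so this is exactly where that hypothesis is spent. For injectivity, if $\Phi(m,n) = \Phi(m',n')$ then $n = n'$ and $f_m = f_{m'}$; evaluating the latter at $y = e_N$ and using the action axiom $e_N * m = m$ yields $m = m'$. Finally, the direct product $M \times N$ is the semidirect product with respect to the trivial action $n*m = m$, which is visibly unitary, so the particular case is an immediate instance of the general statement (and concretely $f_m$ becomes the constant function with value $m$).
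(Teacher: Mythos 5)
Your proposal is correct and is essentially the paper's own proof: the same embedding $(m,n)\mapsto(f_m,n)$ with $f_m(y)=y*m$, the same coordinatewise computation using distributivity and the left-action axiom, unitarity spent on the identity, injectivity via evaluation at $e_N$, and the direct product handled as the trivial (unitary) action. No gaps to report.
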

\begin{proof}
  We define $h:M\rtimes_* N\to M^N\rtimes N$ as $h(m,n)=(f,n)$ for each $m\in M$ and $n\in N$ where $f\in M^N$ is defined as $f(n')=n'*m$ for each $n'\in N$. Since $f(e_N)=e_N*m=m$ for each $m\in M$, $h$ is injective. Because $*$ is unitary, $h(e_M,e_N)=(f_e,e_N)=e_{M^N\rtimes N}$ where $f_e$ is the constant function that maps every $n'\in N$ to $e_M$. Furthermore,
  \begin{eqnarray*}
    h((m_1,n_1))\cdot h((m_2,n_2))&=&(f_1,n_1)\cdot (f_2,n_2)\\
    &=&(f_1\otimes (n_1\oplus f_2),n_1\oplus n_2)
  \end{eqnarray*}
  where $f_1$ (resp. $f_2$) is defined as $f_1(n')=n'*m_1$ (resp. $f_2(n')=n'*m_2$) for each $n'\in N$. Because $(f_1\otimes (n_1\oplus f_2))(n')=f_1(n')\otimes (n_1\oplus f_2)(n')=(n'*m_1)\otimes (n'*(n_1*m_2))=n'*(m_1\otimes (n_1*m_2))$ for each $n'\in N$,
  \begin{eqnarray*}
    h((m_1,n_1))\cdot h((m_2,n_2))&=&
  (f_1\otimes (n_1\oplus f_2),n_1\oplus n_2)\\
  &=&h(m_1\otimes (n_1*m_2),n_1\oplus n_2)\\
  &=&h((m_1,n_1)\cdot (m_2,n_2))
  \end{eqnarray*}
  holds. Therefore, $h$ is an injective monoid homomorphism and $M\rtimes_* N$ is a submonoid of $M^N\rtimes N$.
  It also holds for the direct product $M\times N$ because the trivial action $N\times M\to M$ is a unitary action.\qed
\end{proof}

\subsection{Periods of Regular Languages}
Let $\Sigma$ be a (finite) alphabet.
For $w\in\Sigma^*$, $|w|$ denotes the length of $w$. We let $|w|_a$ denote the number of occurrences of $a\in \Sigma$ in $w\in\Sigma^*$, and $|w|_\Gamma=\sum_{a\in \Gamma}|w|_a$ for $\Gamma\subseteq \Sigma$. For example, $|w|_a=3, |w|_{\{a,b\}}=4$ and $|w|=|w|_{\Sigma}=5$ with $w=aabac\in\{a,b,c\}^*$.

For a regular language $L\subseteq \Sigma^*$, we say that a monoid $M$ {\it fully recognizes} $L$ if there is a surjective monoid homomorphism $\eta:\Sigma^*\to M$ such that $L=\eta^{-1}(S)$ with some $S\subseteq M$.
The smallest monoid that fully recognizes $L$ is called the {\it syntactic monoid} of $L$, and the corresponding homomorphism is called the {\it syntactic morphism} of $L$.
The uniqueness of the syntactic monoid and the syntactic morphism is guaranteed by the following proposition (see \cite{Lallement} in detail).
\begin{proposition}\label{prop-recognize}
Let $L\subseteq\Sigma^*$ be a regular language, and let $M_L$ and $\eta_L$ be the syntactic monoid and the syntactic morphism of $L$, respectively. For any monoid $M$, if $M$ fully recognizes $L$ with a homomorphism $\eta$, then there exists a surjective homomorphism $\psi:M\to M_L$ such that $\eta_L=\psi\circ \eta$.
That is, the following commutative diagram holds:
\begin{equation*}
  \xymatrix{
  {\Sigma^*}\ar@{->>}[r]^-{\forall\eta}\ar[rd]_-{\eta_{L}}\ar@{}<2.0ex>\ar@{->>}[rd]|{}&{M}\ar@{..>>}[d]^-{\exists\psi}\\
  &{M_{L}}~.
  }\vspace{-18pt}
\end{equation*}
\bqed
\end{proposition}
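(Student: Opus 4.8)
The plan is to realize $M_L$ concretely as a quotient of $\Sigma^*$ and then obtain $\psi$ as the quotient map that factors $\eta_L$ through $\eta$. Recall that the syntactic monoid can be constructed as $M_L=\Sigma^*/{\sim_L}$, where the \emph{syntactic congruence} $\sim_L$ is defined by $u\sim_L v$ iff $xuy\in L\Leftrightarrow xvy\in L$ for all $x,y\in\Sigma^*$, and $\eta_L$ is the canonical projection sending a word to its class. It is routine to check that $\sim_L$ is a monoid congruence and that $L$ is a union of $\sim_L$-classes, so $M_L$ fully recognizes $L$; the point to exploit is that it is the smallest (coarsest) such congruence.

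First I would associate to the given homomorphism $\eta:\Sigma^*\to M$ its kernel congruence $\equiv_\eta$, defined by $u\equiv_\eta v$ iff $\eta(u)=\eta(v)$. Since $\eta$ is surjective, $M$ is isomorphic to $\Sigma^*/{\equiv_\eta}$, and under this identification $\eta$ becomes the canonical projection. The key step is to show that $\equiv_\eta$ refines $\sim_L$, i.e. that $\eta(u)=\eta(v)$ implies $u\sim_L v$. This follows from $L=\eta^{-1}(S)$: for any $x,y\in\Sigma^*$ we have $xuy\in L$ iff $\eta(x)\eta(u)\eta(y)\in S$, and replacing $\eta(u)$ by the equal element $\eta(v)$ yields $xuy\in L\Leftrightarrow xvy\in L$, which is exactly $u\sim_L v$.

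Once the refinement $\equiv_\eta\subseteq{\sim_L}$ is established, the map $\psi:M\to M_L$ defined by $\psi(\eta(w))=\eta_L(w)$ is well defined, because $\eta(u)=\eta(v)$ forces $\eta_L(u)=\eta_L(v)$. It is a monoid homomorphism: since $\eta$ is surjective every element of $M$ has the form $\eta(w)$, so $\psi(\eta(u)\eta(v))=\psi(\eta(uv))=\eta_L(uv)=\eta_L(u)\eta_L(v)=\psi(\eta(u))\psi(\eta(v))$, while $\psi(e_M)=\psi(\eta(\varepsilon))=\eta_L(\varepsilon)=e_{M_L}$ for the empty word $\varepsilon$. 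Surjectivity of $\psi$ is inherited from that of $\eta_L$, and the identity $\eta_L=\psi\circ\eta$ holds by construction, giving the claimed commutative diagram.

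I expect the main (and essentially only) obstacle to be the well-definedness of $\psi$, that is, the refinement $\equiv_\eta\subseteq{\sim_L}$; everything else is a formal verification. The crux is recognizing that the hypothesis $L=\eta^{-1}(S)$ together with $\eta$ being a homomorphism already forces $\eta$ to respect the two-sided contexts $x(\cdot)y$ appearing in the definition of $\sim_L$, so the kernel of $\eta$ cannot be coarser than the syntactic congruence. Note that if one prefers to take the abstract definition of $M_L$ as the smallest fully recognizing monoid, this proposition is precisely the statement that makes ``smallest'' meaningful, and the explicit construction $\Sigma^*/{\sim_L}$ above is what guarantees the existence of such a smallest object.
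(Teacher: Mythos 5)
Your proof is correct. There is, however, nothing in the paper to compare it against: the paper states this proposition without proof, citing Lallement's book, and marks it as a known fact. Your argument is precisely the standard one such references give — realize $M_L$ concretely as $\Sigma^*/{\sim_L}$, observe that $L=\eta^{-1}(S)$ together with $\eta$ being a homomorphism forces the kernel congruence $\equiv_\eta$ to refine the syntactic congruence $\sim_L$, and let $\psi$ be the induced map on quotients; the verification that $\psi$ is a surjective homomorphism with $\eta_L=\psi\circ\eta$ is then routine, as you say. One point deserves to be made fully explicit. The paper defines $M_L$ abstractly as the \emph{smallest} monoid fully recognizing $L$, while you work with the concrete quotient $\Sigma^*/{\sim_L}$; your closing remark correctly identifies that the universal property you prove is what reconciles the two, but to close the loop you should add the one-line observation that a surjective homomorphism between finite monoids of equal cardinality is an isomorphism. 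That observation shows any minimal-cardinality fully recognizing monoid is isomorphic to $\Sigma^*/{\sim_L}$, so the abstract ``smallest'' object exists, is unique up to isomorphism, and coincides with your quotient — after which your proof applies verbatim to the paper's $M_L$.
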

Because $\Sigma^*$ is generated by $\Sigma$, the syntactic monoid $M_L$ is generated by $\eta_L(\Sigma)$. When we illustrate the Cayley graph of $M_L$, the label $\eta_L(a)$ is often abbreviated as $a$ for each $a\in\Sigma$.
{
It is well known that the syntactic monoid of $L$ is isomorphic to the {\it transition monoid}\footnote{{For a DFA $\mathcal{A}$ with the set of states $Q$, the transition monoid of $\mathcal{A}$ is defined as $T(\mathcal{A})=\{\tau_w\in Q\to Q\mid w\in \Sigma^*\}$ where $\tau_w(q)=q'$ iff $q\xrightarrow{w}q'$ is the transition of $\mathcal{A}$ for each $q,q'\in Q$.}} of the minimal DFA of $L$.}

We define periods of regular languages, which is the key concept of this paper. As discussed in Section 1, the concept of periods is inspired by studies on finite state Markov chains. Therefore, it is natural to define periods based on the graph structure of a DFA.
However, there are more than one DFAs that recognize the same regular language.
In this paper, we opt for the Cayley graph of the syntactic monoid to define periods. (Note that the Cayley graph of the syntactic monoid of a regular language $L$ can be regarded as a DFA recognizing $L$.)
This is because the syntactic monoid more appropriately represents the periodicity of a regular language than the minimal DFA and other DFAs.
{For example, Lemmas \ref{lem-disjoint} and \ref{lem-syntPeriod}, mentioned later, demonstrate essential properties of periods, but the same results do not hold for the minimal DFA.}
We will discuss this aspect in detail in Section 4.1.

\begin{definition}
  Let $L\subseteq \Sigma^*$ be a regular language, and let $M_L$ and $\eta_L$ be the syntactic monoid and the syntactic morphism of $L$, respectively.
  For a non-empty subset $\Gamma\subseteq \Sigma$, we say that $L$ has a {\rm period} $P$ with respect to $\Gamma$ if $P$ satisfies: for every $w\in\Sigma^*$ such that $t\cdot \eta_L(w)=t$ for some $t\in M_L$, $|w|_{\Gamma}$ is a multiple of $P$.\bqed
\end{definition}

{By the definition of period, every language has period one with respect to each subset of $\Sigma$. We are mainly interested in the maximum number of all the periods for each subset of $\Sigma$. For example, all periods mentioned in Example~\ref{ex-language} below are maximum periods.
We often say that a language $L\subseteq\Sigma^*$ is {\it periodic} if there is a period greater than one with respect to some non-empty subset $\Gamma\subseteq \Sigma$.}
In the rest of the paper, $M_L$ and $\eta_L$ always represent the syntactic monoid and the syntactic morphism of a regular language $L$.

\begin{example}\label{ex-language}
  The followings are examples of periodic regular languages.
\begin{enumerate}[(1)]
\item Let $L_1\subseteq \{a,b\}^*$ be the language defined by DFA $\mathcal{A}_1$ (see Figure 1-1). The syntactic monoid of $L_1$ is $M_{L_1}=C_2\times C_2$, and the Cayley graph of $M_{L_1}$ has the same shape as $\mathcal{A}_1$. Therefore, $L_1$ has period $2$ with respect to $\{a\},\{b\}$ and $\{a,b\}$.
\item  Let $L_2\subseteq \{a,b\}^*$ be the language defined by DFA $\mathcal{A}_2$ (see Figure 1-2). The syntactic monoid of $L_2$ is $M_{L_2}=\mathcal{S}_3$, and the Cayley graph of $M_{L_2}$ is shown in Figure 2-2.
We can easily show that $\eta_{L_2}(a)$ is an odd permutation and $\eta_{L_2}(b)$ is an even permutation. Therefore, $L_2$ has period $2$ with respect to $\{a\}$.
\item Let $L_3\subseteq \{a,b\}^*$ be the language defined by DFA $\mathcal{A}_3$ (see Figure 1-3). By the shape of the Cayley graph of $M_{L_3}$ shown in Figure 2-3, $L_3$ has period $2$ with respect to $\{a,b\}$.
\end{enumerate}
\end{example}
\begin{minipage}[b]{0.3\linewidth}
  \centering
  \begin{tikzpicture}[auto]
    \node [scale=0.65] (s0) [state, initial, initial above, initial text=,accepting] at (0,0) {};
    \node [scale=0.65] (s1) [state] at (1.4,0) {};
    \node [scale=0.65] (s2) [state] at (0,-1.4) {};
    \node [scale=0.65] (s3) [state] at (1.4,-1.4) {};

    \path [-stealth, thick]
      (s0) edge [bend left] node {$a$} (s1)
      (s1) edge node {$a$} (s0)
      (s2) edge node {$a$} (s3)
      (s3) edge [bend left] node {$a$} (s2)
      (s0) edge node {$b$} (s2)
      (s2) edge [bend left] node {$b$} (s0)
      (s1) edge [bend left] node {$b$} (s3)
      (s3) edge node {$b$} (s1);
  \end{tikzpicture}
  \\\vspace{2pt}{\bf Figure 1-1.} DFA $\mathcal{A}_1$
\end{minipage}
\begin{minipage}[b]{0.3\linewidth}
  \centering
  \begin{tikzpicture}[auto]
    \node [scale=0.65] (s0) [state, initial, initial above, initial text=,accepting] at (0,0) {};
    \node [scale=0.65] (s2) [state] at (0,-1.4) {};
    \node [scale=0.65] (s3) [state] at (1.4,-1.4){};

    \path [-stealth, thick]
      (s0) edge [loop right] node {$a$} ()
      (s2) edge node {$b$} (s0)
      (s0) edge node {$b$} (s3)
      (s3) edge [bend left] node {$a,b$} (s2)
      (s2) edge node {$a$} (s3);
  \end{tikzpicture}
  \\\vspace{2pt}{\bf Figure 1-2.} DFA $\mathcal{A}_2$
\end{minipage}
\begin{minipage}[b]{0.35\linewidth}
  \begin{tikzpicture}[auto]
    \node [scale=0.65] (s0) [state] at (0,0) {};
    \node [scale=0.65] (s1) [state, initial, initial above, initial text=] at (1.4,0) {};
    \node [scale=0.65] (s2) [state, accepting] at (0,-1.4) {};
    \node [scale=0.65] (s3) [state, accepting] at (2.8,-1.4) {};

    \path [-stealth, thick]
      (s3) edge [loop above] node {$a,b$} ()
      (s1) edge [bend right] node [below] {$b$} (s3)
      (s1) edge [bend left] node [below] {$a$} (s2)
      (s0) edge node {$a,b$} (s2)
      (s2) edge [bend left] node {$a,b$} (s0);
  \end{tikzpicture}
  \centering
  \\\vspace{16pt}\hspace{18pt}{\bf Figure 1-3.} DFA $\mathcal{A}_3$
\end{minipage}

\vspace{5pt}
\begin{minipage}[b]{0.45\linewidth}
  \begin{tikzpicture}[auto]
    \node [scale=0.65] (s0) [state] at (0,0) {};
    \node [scale=0.65] (s1) [state] at (1.4,0) {};
    \node [scale=0.65] (s2) [state] at (2.8,0) {};
    \node [scale=0.65] (s3) [state] at (0,-1.4) {};
    \node [scale=0.65] (s4) [state] at (1.4,-1.4) {};
    \node [scale=0.65] (s5) [state] at (2.8,-1.4) {};

    \path [-stealth, thick]
      (s0) edge node [below] {$b$} (s1)
      (s1) edge node [below] {$b$} (s2)
      (s2) edge [bend right] node [above] {$b$} (s0)
      (s4) edge node [above] {$b$} (s3)
      (s5) edge node [above] {$b$} (s4)
      (s3) edge [bend right] node [below] {$b$} (s5)
      (s0) edge [bend left] node {$a$} (s3)
      (s3) edge node {$a$} (s0)
      (s1) edge [bend left] node {$a$} (s4)
      (s4) edge node {$a$} (s1)
      (s2) edge [bend left] node {$a$} (s5)
      (s5) edge node {$a$} (s2);
  \end{tikzpicture}
  \centering
  \\\vspace{3pt}{\bf Figure 2-2.} Cayley graph of $M_{L_2}$
\end{minipage}
\begin{minipage}[b]{0.45\linewidth}
  \begin{tikzpicture}[auto]
    \node [scale=0.65] (s0) [state] at (0,0) {};
    \node [scale=0.65] (s1) [state] at (1.4,0) {};
    \node [scale=0.65] (s2) [state] at (0,-1.4) {};
    \node [scale=0.65] (s3) [state] at (2.8,-1.4) {};
    \node [scale=0.65] (s4) [state] at (2.8,0) {};

    \path [-stealth, thick]
      (s4) edge [bend left] node {$a,b$} (s3)
      (s3) edge node {$a,b$} (s4)
      (s1) edge [bend right] node [below] {$b$} (s3)
      (s1) edge [bend left] node [below] {$a$} (s2)
      (s0) edge node {$a,b$} (s2)
      (s2) edge [bend left] node {$a,b$} (s0);
  \end{tikzpicture}
  \centering
  \\\vspace{18pt}\hspace{6pt}{\bf Figure 2-3.} Cayley graph of $M_{L_3}$
\end{minipage}

\section{Decompositions of Periodic Regular Languages}
Let $L\subseteq \Sigma^*$ be a regular language with periods $P_1,\ldots,P_n> 1$ with respect to $\Gamma_1,\ldots,\Gamma_n\subseteq \Sigma$, respectively. We define the monoid homomorphism $\rho:\Sigma^*\to (C_{P_1}\times \cdots\times C_{P_n})$ as $\rho(w)=(r_1,\ldots,r_n)$ where \[r_i=|w|_{\Gamma_i}\!\!\mod P_i\] for each $1\le i\le n$.
We call $\rho(w)$ the {\it residual} of $w$ (modulo $P_1,\ldots, P_n$).
For example, let $\Sigma=\{a,b\}$ and let $L$ have periods $P_1=2$ and $P_2=2$ with respect to $\Gamma_1=\{a\}$ and $\Gamma_2=\{a,b\}$, respectively.
Then, $\rho(aabac)=(1,0)$ holds.

The next lemma states that the syntactic monoid is partitioned into $N_{\vb{r}}=\eta_L(\{w\in\Sigma^*\mid \rho(w)=\vb{r}\})$ for residuals $\vb{r}\in C_{P_1}\times \cdots\times C_{P_n}$.
{
Note that this property does not hold for the minimal DFA. (Compare the DFAs in Figures 1--3 and 2--3 in Example~\ref{ex-language}.)}

\begin{lemma}\label{lem-disjoint}
  Let $L\subseteq \Sigma^*$ be a regular language, and $M_L$ and $\eta_L$ be the syntactic monoid and the syntactic morphism of $L$.
  Let $L$ have periods $P_1,\ldots,P_n> 1$ with respect to $\Gamma_1,\ldots,\Gamma_n\subseteq \Sigma$, respectively. Then, it holds that\[
  M_L=\bigsqcup_{\vb{r}\in C_{P_1}\times \cdots\times C_{P_n}}N_{\vb{r}}
  \]
  where each $N_{\vb{r}}\subseteq M_L$ is defined as $N_{\vb{r}}=\eta_L(\{w\in\Sigma^*\mid \rho(w)=\vb{r}\})$.
\end{lemma}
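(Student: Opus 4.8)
The plan is to establish the two defining properties of a partition: that the sets $N_{\vb{r}}$ cover $M_L$, and that they are pairwise disjoint. Covering is immediate, since $\eta_L$ is surjective: every $m\in M_L$ equals $\eta_L(w)$ for some $w\in\Sigma^*$, and then $m\in N_{\rho(w)}$. Hence $M_L=\bigcup_{\vb{r}}N_{\vb{r}}$, and the whole content of the lemma lies in disjointness.

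Disjointness is equivalent to the statement that the residual $\rho$ factors through $\eta_L$, i.e.\ that $\eta_L(u)=\eta_L(v)$ implies $\rho(u)=\rho(v)$. Indeed, if $m\in N_{\vb{r}}\cap N_{\vb{r}'}$ then $m=\eta_L(u)=\eta_L(v)$ for witnesses $u,v$ with $\rho(u)=\vb{r}$ and $\rho(v)=\vb{r}'$, and $\rho(u)=\rho(v)$ forces $\vb{r}=\vb{r}'$. So I would fix $u,v$ with $\eta_L(u)=\eta_L(v)=:m$ and aim to show $|u|_{\Gamma_i}\equiv|v|_{\Gamma_i}\pmod{P_i}$ for each $i$.

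The device I would use is the idempotent power of $m$. Since $M_L$ is finite, there is an exponent $k\ge 1$ for which $e:=m^k$ is idempotent, so $e\cdot e=e$. Now both $u^k$ and $u^{k-1}v$ map under $\eta_L$ to $m^k=e$, and therefore each gives a loop at the vertex $e$ of the Cayley graph: $e\cdot\eta_L(u^k)=e\cdot e=e$ and $e\cdot\eta_L(u^{k-1}v)=e\cdot e=e$. Applying the definition of period to these two loops (with $t=e$), both $|u^k|_{\Gamma_i}=k|u|_{\Gamma_i}$ and $|u^{k-1}v|_{\Gamma_i}=(k-1)|u|_{\Gamma_i}+|v|_{\Gamma_i}$ are multiples of $P_i$. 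Subtracting these two multiples of $P_i$ yields $|u|_{\Gamma_i}-|v|_{\Gamma_i}\equiv 0\pmod{P_i}$, which is exactly the required $\rho(u)=\rho(v)$.

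The main obstacle, and the one genuinely creative step, is finding loops whose $\Gamma_i$-counts subtract to the single quantity $|u|_{\Gamma_i}-|v|_{\Gamma_i}$; the pair $(u^k,\,u^{k-1}v)$ works precisely because replacing one copy of $u$ by $v$ does not move the endpoint (both images equal the idempotent $m^k$). The remaining points are routine: the existence of the idempotent power in a finite monoid, and the edge case $k=1$, where $u^{k-1}v=v$ and the argument still applies because then $m$ itself is idempotent, so $|u|_{\Gamma_i}$ and $|v|_{\Gamma_i}$ are already individually multiples of $P_i$.
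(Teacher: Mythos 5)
Your proof is correct and follows essentially the same route as the paper: the paper also exploits finiteness to get $\eta_L(w_1)^{j_1}\cdot\eta_L(w_1)^{j_2}=\eta_L(w_1)^{j_1}$ and then swaps one factor $w_1$ for $w_2$ (forming $w_1^{j_2-1}w_2$) to produce a loop violating the period, which is exactly your $u^k$ versus $u^{k-1}v$ device at the idempotent power. The only cosmetic difference is that you subtract the two congruences directly, whereas the paper first normalizes by appending a suffix so that $|w_1|_{\Gamma_i}\equiv 0$ and argues by contradiction.
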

\begin{proof} It holds that $M_L=\bigcup_{\vb{r}\in C_{P_1}\times \cdots\times C_{P_n}}N_{\vb{r}}$ because $\eta_L$ is surjective. Therefore, we show that $N_{\vb{r}_1}\cap N_{\vb{r}_2}=\emptyset$ for each $\vb{r}_1\ne \vb{r}_2$.
We assume that $N_{\vb{r}_1}\cap N_{\vb{r}_2}\ne \emptyset$ for some $\vb{r}_1\ne \vb{r}_2$. Then, there exist $w_1,w_2\in\Sigma^*$ and $\Gamma_i\in\{\Gamma_1,\ldots,\Gamma_n\}$ such that $\eta_L(w_1)=\eta_L(w_2)$ and $|w_1|_{\Gamma_i}\ne |w_2|_{\Gamma_i}~{\rm mod}~ P_i$.
By concatenating an appropriate suffix to $w_1$ and $w_2$, we can assume that $|w_1|_{\Gamma_i}$ is a multiple of $P_i$, and $|w_2|_{\Gamma_i}$ is not a multiple of $P_i$ without loss of generality.

Note that there are $j_1\ge 0$ and $j_2\ge 1$ such that $\eta_L(w_1)^{j_1}\cdot \eta_L(w_1)^{j_2}=\eta_L(w_1)^{j_1}$ because $M_L$ is finite. Since $\eta_L(w_1)=\eta_L(w_2)$, it holds that\begin{eqnarray*}
\eta_L(w_1)^{j_1}&=& \eta_L(w_1)^{j_1}\cdot \eta_L(w_1)^{j_2}\\
&=&\eta_L(w_1)^{j_1}\cdot \eta_L(w_1)^{j_2-1}\cdot \eta_L(w_2)\\
&=&\eta_L(w_1)^{j_1}\cdot \eta_L(w_1^{j_2-1}w_2)~.
\end{eqnarray*}
However, $|w_1^{j_2-1}w_2|_{\Gamma_i}$ is not a multiple of $P_i$, and it contradicts the fact that $L$ has period $P_i$ with respect to $\Gamma_i$.\qed
\end{proof}

Let $N_{\vb{r}}$ be the subset defined in Lemma \ref{lem-disjoint} for each $\vb{r}\in C_{P_1}\times \cdots\times C_{P_n}$.
We define $\overline{\rho}:M_L\to C_{P_1}\times \cdots\times C_{P_n}$ as $\overline{\rho}(t)=\rho(w)$
with any $w\in \Sigma^*$ such that $t = \eta_L(w)$.
By Lemma~\ref{lem-disjoint}, $\overline{\rho}$ is well-defined and is a monoid homomorphism. We call $\overline{\rho}(t)$ the {\it residual} of $t$ (modulo $P_1,\ldots, P_n$).

\subsection{Semidirect Product Decompositions with Cyclic Groups}
Our goal is to show that $M_L$ can be decomposed into a submonoid of
$N^{C_{P_1} \times \cdots\times C_{P_n}} \rtimes (C_{P_1} \times \cdots \times C_{P_n})$ for some appropriate monoid $N$ described below.
This decomposition means that when a residual $\vb{r}$ (modulo $P_1,\ldots,P_n$) is given,
we can obtain the fragment of $M_L$ with respect to $\vb{r}$ ($N_{\bf r}$ in Figure 3) as follows:
First, retrieve information from $N$ (by using $\vb{r}$ as a retrieval key)
and then take the semidirect product of the retrieved fragment and $\vb{r}$.
By Lemma~\ref{lem-disjoint}, $M_L$ can be represented as the disjoint union of $N_{\vb{r}}$ for every residual $\vb{r}$ modulo $P_1,\ldots, P_n$.
That is, $N_{\vb{r}}$ is exactly the information on $M_L$ with respect to $\vb{r}$ mentioned above.
{
To embed $N_{\vb{r}}$ for every residual $\vb{r}$ into $N$, we set $N=\TK$ for an appropriate number $K\ge 1$. In fact, taking $K=\max_{\vb{r}}\{|N_{\vb{r}}|\}$ suffices.}

\begin{figure}[h]
  \centering
\includegraphics[height=110pt]{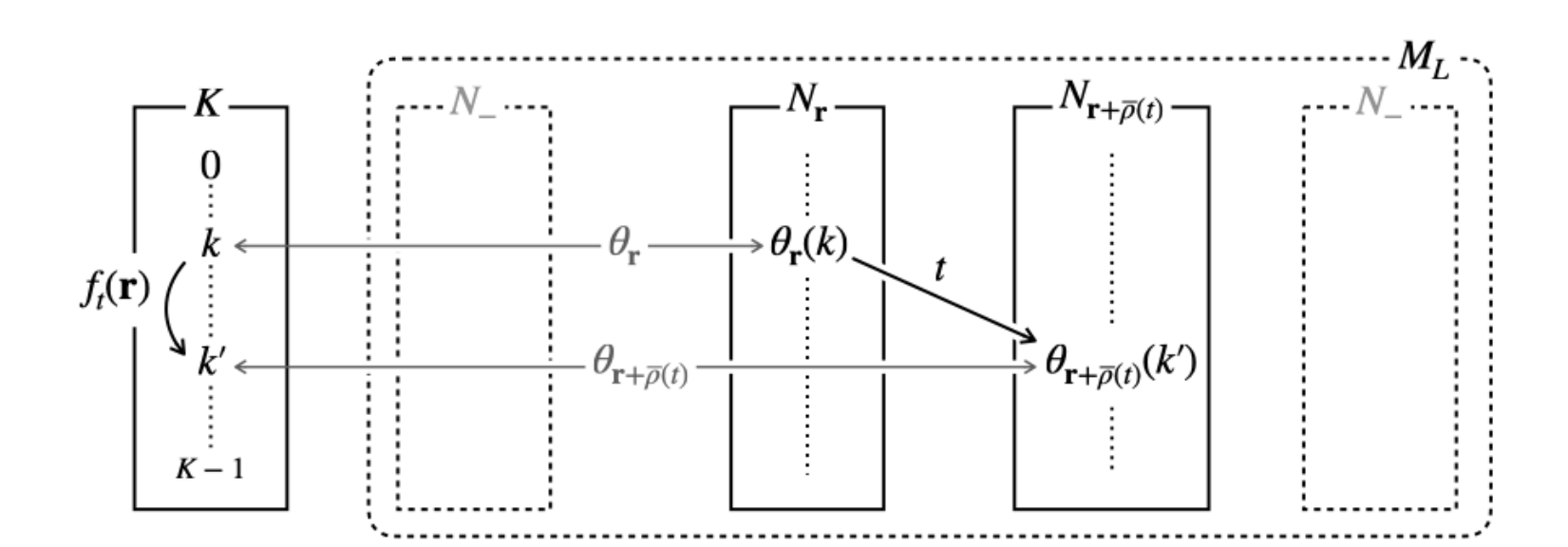}
\caption*{{\bf Figure 3.} The definition of $f_t$}
\end{figure}
\begin{theorem}\label{th-main}
  Let $L\subseteq \Sigma^*$ be a regular language, and $M_L$ and $\eta_L$ be the syntactic monoid and the syntactic morphism of $L$. For any non-empty subsets $\Gamma_1,\ldots,\Gamma_n\subseteq \Sigma$, the following conditions are equivalent:\begin{enumerate}[(i)]
  \item $L$ has periods $P_1,\ldots,P_n$ with respect to $\Gamma_1,\ldots,\Gamma_n\subseteq \Sigma$, respectively.
  \item $M_L$ is a submonoid of $\TK^{C_{P_1}\times\cdots\times C_{P_n}}\rtimes (C_{P_1}\times\cdots\times C_{P_n})$ where $K=\max\{|N_{\vb{r}}|\mid \vb{r}\in C_{P_1}\times\cdots\times C_{P_n}\}$. Furthermore, $\eta_L(w)\in \TK^{C_{P_1}\times\cdots\times C_{P_n}}\times \{\rho(w)\}$ for each $w\in\Sigma^*$.
  \end{enumerate}
\end{theorem}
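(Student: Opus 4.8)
Throughout write $G = C_{P_1}\times\cdots\times C_{P_n}$, so that $\rho\colon\Sigma^*\to G$ and (under assumption (i)) $\overline{\rho}\colon M_L\to G$ are as defined before Lemma~\ref{lem-disjoint}. My plan is to prove the two implications separately; the direction (ii)$\Rightarrow$(i) is the short one. Assume (ii). The projection $\pi\colon \TK^{G}\rtimes G\to G$ onto the second coordinate is a monoid homomorphism, since by \eqref{eq:semidirect*1} the $G$-components multiply componentwise, and by the ``furthermore'' clause $\pi(\eta_L(w))=\rho(w)$. Now suppose $t\cdot\eta_L(w)=t$ for some $t\in M_L$. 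Applying $\pi$ gives $\pi(t)+\rho(w)=\pi(t)$ in $G$, and since $G$ is a finite abelian group we may cancel to get $\rho(w)=\vb{0}$, i.e. $|w|_{\Gamma_i}\equiv 0 \pmod{P_i}$ for every $i$. Hence each $P_i$ is a period of $L$ with respect to $\Gamma_i$, which is (i).

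For (i)$\Rightarrow$(ii), assume the periods. By Lemma~\ref{lem-disjoint} we have the disjoint decomposition $M_L=\bigsqcup_{\vb{r}}N_{\vb{r}}$ together with the well-defined homomorphism $\overline{\rho}$. For each residual $\vb{s}$ with $N_{\vb{s}}\neq\emptyset$ fix an injection $\phi_{\vb{s}}\colon N_{\vb{s}}\hookrightarrow K$, which exists because $|N_{\vb{s}}|\le K=\max_{\vb{r}}|N_{\vb{r}}|$. For $t\in M_L$ with $\overline{\rho}(t)=\vb{r}$, right multiplication by $t$ respects the grading, mapping $N_{\vb{s}}$ into $N_{\vb{s}+\vb{r}}$ because $\overline{\rho}(mt)=\overline{\rho}(m)+\vb{r}$. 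I define $f_t\in\TK^{G}$ by requiring that, on the image of $\phi_{\vb{s}}$, the transformation $f_t(\vb{s})$ send $\phi_{\vb{s}}(m)\mapsto \phi_{\vb{s}+\vb{r}}(mt)$ (the behaviour on the remaining states is the delicate point, addressed below), and I set $h(t)=(f_t,\overline{\rho}(t))$. On the states that actually represent elements of $M_L$, the homomorphism identity forced by \eqref{eq:semidirect*1} and \eqref{eq:act-function*2}, namely that $f_{t_1t_2}(\vb{s})$ is $f_{t_1}(\vb{s})$ followed by $f_{t_2}(\vb{s}+\overline{\rho}(t_1))$, reduces exactly to associativity of right multiplication, $\phi_{\vb{s}+\vb{r}_1+\vb{r}_2}(m\,t_1t_2)=\phi_{\vb{s}+\vb{r}_1+\vb{r}_2}((mt_1)t_2)$; this is the closed form \eqref{eq:closedform} read off a word. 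Injectivity is robust to the totalization: since $\vb{0}\in\overline{\rho}(M_L)$ and $e_{M_L}\in N_{\vb{0}}$, we recover $\phi_{\overline{\rho}(t)}(t)=f_t(\vb{0})\bigl(\phi_{\vb{0}}(e_{M_L})\bigr)$ and hence $t$, evaluating only at the represented state $\phi_{\vb{0}}(e_{M_L})$. The ``furthermore'' clause holds because $\overline{\rho}(\eta_L(w))=\rho(w)$ by construction. If one only needs $h$ to be an injective semigroup homomorphism, extending each $f_t(\vb{s})$ by a fixed retraction $\psi_{\vb{s}}\colon K\to N_{\vb{s}}$ already makes the above identity hold on all $K$ states, and on residuals $\vb{s}\notin\overline{\rho}(M_L)$ (where $N_{\vb{s}}=\emptyset$ and, by the subgroup property, the whole $\overline{\rho}(t_1)$-shifted fibre is empty too) one simply sets $f_t(\vb{s})=\mathrm{id}_K$.

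I expect the \emph{main obstacle} to be totalizing the $f_t(\vb{s})$ to genuine transformations of all $K$ states while using exactly $K=\max_{\vb{r}}|N_{\vb{r}}|$ states \emph{and} keeping $h$ an identity-preserving monoid homomorphism, i.e. forcing $f_{e_{M_L}}(\vb{s})=\mathrm{id}_K$ so that $h(e_{M_L})$ is the identity of $\TK^{G}\rtimes G$ (the naive retraction above gives only an idempotent here). Cleanly restated, the two requirements $f_{e_{M_L}}(\vb{s})=\mathrm{id}_K$ and the composition identity are together equivalent to saying that $(\vb{s},x)\cdot t:=(\vb{s}+\overline{\rho}(t),\,f_t(\vb{s})(x))$ defines a right action of $M_L$ on $G\times K$, graded over the $M_L$-set $(G,\,\vb{s}\mapsto\vb{s}+\overline{\rho}(\cdot))$ and extending the regular right action carried by the copy $R=\{(\vb{s},\phi_{\vb{s}}(m))\mid m\in N_{\vb{s}}\}\cong M_L$. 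Thus the task becomes: extend the regular right $M_L$-action to a residual-graded action on $G\times K$ whose every fibre has size exactly $K$. This amounts to padding each fibre $N_{\vb{s}}$ up to size $K$ by dummy states equivariantly. The structural fact that makes this tractable is that in the finite monoid $M_L$ a product equals a unit only if both factors are units; consequently all bijectivity constraints are confined to the subgroup of units, which acts bijectively and therefore keeps $|N_{\vb{s}}|$ constant along its orbits (matching the padding sizes), while non-units may collapse dummies freely. Verifying that such a padded action is associative—equivalently, that no choice of dummy transitions clashes under products—is the crux; once it is established, $h$ is the required injective, identity-preserving monoid homomorphism and (ii) follows.
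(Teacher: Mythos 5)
Your direction (ii)$\Rightarrow$(i) and the ``represented-state'' part of (i)$\Rightarrow$(ii) are fine and coincide with the paper's own proof: your injections $\phi_{\vb{s}}\colon N_{\vb{s}}\hookrightarrow K$ are exactly the paper's bijections $\theta_{\vb{s}}$, your $f_t$ is the paper's $f_t$ on the points that encode elements of $M_L$, and your injectivity argument (evaluate at $\phi_{\vb{0}}(e_{M_L})$) is the paper's. The genuine gap is that you stop precisely at the decisive step: you reduce (i)$\Rightarrow$(ii) to extending the regular graded right action of $M_L$ to an action on $G\times K$ whose fibres all have size $K=\max_{\vb{r}}|N_{\vb{r}}|$ and in which $e_{M_L}$ acts as the identity, and then declare the existence/associativity of such a padded action ``the crux'' without proving it. Since everything before that reduction is routine, the unproved crux \emph{is} the theorem; as written, your text is a proof plan, not a proof. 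Your structural remarks (units must permute the padding because a product in a finite monoid is a unit only if both factors are, and $|N_{\vb{s}}|$ is constant along unit-orbits) are correct and are the right ingredients, but no padded action is actually exhibited.

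For comparison: the paper ``solves'' this point by padding with the identity, $f_t(\vb{r})(k)=k$ for $k\ge|N_{\vb{r}}|$, and then verifies multiplicativity of $\Can$ only for $k<|N_{\vb{r}}|$. Your caution here is well-founded, because that padding is in fact \emph{not} multiplicative when the fibre sizes differ. Take the paper's own $L_3$, where $|N_0|=3$, $|N_1|=2$, $K=3$, and let $s=\eta_{L_3}(a)$, so $\overline{\rho}(s)=1$. By the paper's product formula, $\bigl(f_s\cdot(1\circledast f_s)\bigr)(1)(2)=f_s(0)\bigl(f_s(1)(2)\bigr)=f_s(0)(2)$, which lies in $\{0,1\}$ because $f_s(0)$ maps every point of $\{0,1,2\}$ into $\{0,\ldots,|N_1|-1\}$; but $f_{s\cdot s}(1)(2)=2$ by the padding rule, since $2\ge|N_1|$. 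Hence $\Can(s)\cdot\Can(s)\ne\Can(s\cdot s)$, for every choice of the bijections $\theta_{\vb{r}}$. So the paper's $\Can$ is not a homomorphism, and the step you refused to wave through is exactly where its proof breaks. Note also that your retraction padding \emph{does} yield an injective semigroup homomorphism $h$ whose image is closed under the product and has $h(e_{M_L})$ as its own identity; under the paper's first definition of submonoid (``a subset that forms a monoid'') this already gives (ii), including the ``furthermore'' clause. What remains open, both in your proposal and in the paper, is the stronger reading of (ii) as an identity-preserving embedding into $\TK^{C_{P_1}\times\cdots\times C_{P_n}}\rtimes(C_{P_1}\times\cdots\times C_{P_n})$ with exactly $K=\max_{\vb{r}}|N_{\vb{r}}|$ states per residual; settling that requires carrying out (or refuting) the unit-equivariant padding you sketch, e.g.\ by showing the deficiencies $K-|N_{\vb{r}}|$ can be realized by unit-orbits of ``proxy'' elements of $N_{\vb{r}}$.
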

\begin{proof}
Suppose (ii) holds.
Note that $(f_1,\vb{r}_1)\cdot (f_2,\vb{r}_2)=(f_1\cdot (\vb{r_1}+f_2),\vb{r}_1+\vb{r}_2)$ for each $(f_1,\vb{r}_1),(f_2,\vb{r}_2)\in M_L$.
(See \eqref{eq:semidirect*1} in the definition of $\rtimes_*$ and \eqref{eq:act-function*2} in the definition of $\rtimes$.)
Therefore, if $t\cdot \eta_L(w)=t$ for some $t\in M_L$ and $w\in \Sigma^*$, then $\rho(w)=(0,\ldots,0)$ because $(f_1,\vb{r}_1)\cdot (f_2,\vb{r}_2)=(f_1,\vb{r}_1)$ implies $\vb{r}_2=(0,\ldots,0)$.
Hence, $|w|_{\Gamma_i}=0\!\mod P_i$ for each $1\le i\le n$, and (i) holds.

We show (i)$\Rightarrow$(ii).
It suffices to prove that if $L$ has periods $P_1,\ldots,P_n$ with respect to $\Gamma_1,\ldots,\Gamma_n\subseteq \Sigma$, then there is an injective homomorphism $\Can:M_L\to \TK^{C_{P_1}\times\cdots\times C_{P_n}}\rtimes (C_{P_1}\times\cdots\times C_{P_n})$ such that $\Can(t)\in  \TK^{C_{P_1}\times\cdots\times C_{P_n}}\times \{{\overline{\rho}}(t)\}$ for each $t \in M_L$.

Let $\theta_{\vb{r}}:|N_{\vb{r}}|\to N_{\vb{r}}$ be an arbitrary bijection for each $\vb{r}\in C_{P_1}\times\cdots\times C_{P_n}$.
Intuitively, $\theta^{-1}_{\vb{r}}$ provides a total ordering of $N_{\vb{r}}$.
For each $t\in M_L$, define $\Can(t)=(f_t,\overline{\rho}(t))\in \TK^{C_{P_1}\times\cdots\times C_{P_n}}\rtimes (C_{P_1}\times\cdots\times C_{P_n})$ where
\[f_t(\vb{r})(k)=\begin{cases}
{\theta_{{\vb{r}}+\overline{\rho}(t)}}^{-1}(\theta_{\vb{r}}(k)\cdot t)&\mbox{if~~~}k<|N_{\vb{r}}|~,\\
k&\mbox{if~~~}k\ge |N_{\vb{r}}|
\end{cases}
\]for each ${\vb{c}}\in C_{P_1}\times\cdots\times C_{P_n}$ and $0\le k<K$.
We show that $\Can$ is a homomorphism, that is, $\Can(s)\cdot \Can(s')=\Can(t)$ for each $s,s',t\in M_L$ such that $t=s\cdot s'$.
Let $\Can(s)=(f_s,\overline{\rho}(s))$ and $\Can(s')=(f_{s'},\overline{\rho}(s'))$. Then,
\begin{eqnarray*}
  \Can(s)\cdot \Can(s')
  &=&(f_s,\overline{\rho}(s))\cdot (f_{s'},\overline{\rho}(s'))\\
  &=& (f_s\cdot ( \overline{\rho}(s)\circledast f_{s'}),\overline{\rho}(s)+ \overline{\rho}(s'))\\
  &=& (f_s\cdot ( \overline{\rho}(s)\circledast f_{s'}),\overline{\rho}(t))
\end{eqnarray*}
where $(f_s\cdot ( \overline{\rho}(s)\circledast f_{s'}))(\vb{r})$ with $\vb{r}\in C_{P_1}\times\cdots\times C_{P_n}$ is a transformation such that
\begin{eqnarray*}
  (f_s\cdot ( \overline{\rho}(s)\circledast f_{s'}))(\vb{r})(k)
  &=&(( \overline{\rho}(s)\circledast f_{s'})(\vb{r})\circ f_s(\vb{r}))(k)\\
  &=&f_{s'}(\vb{r}+\overline{\rho}(s))(f_s(\vb{r})(k))\\
  &=&f_{s'}(\vb{r}+\overline{\rho}(s))({\theta_{\vb{r}+\overline{\rho}(s)}}^{-1}(\theta_{\vb{r}}(k)\cdot s))\\
  &=&{\theta_{\vb{r}+\overline{\rho}(s)+\overline{\rho}(s')}}^{-1}
  ({\theta_{\vb{r}+\overline{\rho}(s)}}({\theta_{\vb{r}+\overline{\rho}(s)}}^{-1}(\theta_{\vb{r}}(k)\cdot s))\cdot s')\\
  &=&{\theta_{\vb{r}+\overline{\rho}(s\cdot s')}}^{-1}(\theta_{\vb{r}}(k)\cdot (s\cdot s'))\\
  &=&{\theta_{\vb{r}+\overline{\rho}(t)}}^{-1}(\theta_{\vb{r}}(k)\cdot t)
\end{eqnarray*}
for each $0\le k<|N_{\vb{r}}|$. Therefore, $\Can(s)\cdot \Can(s')=\Can(t)$.

Finally, we show that $\Can$ is injective. Let $\Can(t)=(f_t,\vb{r})$ and $\Can(t')=(f_{t'},\vb{r}')$. By the definition of $f_t$, it holds that
\begin{eqnarray*}
f_t(\vb{0})({\theta_{\vb{0}}}^{-1}(e_{M_L}))={\theta_{\vb{r}}}^{-1}(\theta_{\vb{0}}({\theta_{\vb{0}}}^{-1}(e_{M_L}))\cdot t)={\theta_{\vb{r}}}^{-1}(e_{M_L}\cdot t)={\theta_{\vb{r}}}^{-1}(t)
\end{eqnarray*}
where $\vb{0}=(0,\ldots,0)$.
In the same way, $f_{t'}(\vb{0})({\theta_{\vb{0}}}^{-1}(e_{M_L}))={\theta_{\vb{r}'}}^{-1}(t')$.
Now, if $\Can(t)=\Can(t')$, then $f_t=f_{t'}$ and $\vb{r}=\vb{r}'$.
Therefore, \[{\theta_{\vb{r}}}^{-1}(t)=f_t(\vb{0})({\theta_{\vb{0}}}^{-1}(e_{M_L}))=f_{t'}(\vb{0})({\theta_{\vb{0}}}^{-1}(e_{M_L}))={\theta_{\vb{r}'}}^{-1}(t')={\theta_{\vb{r}}}^{-1}(t').\]
Thus, $t=\theta_{\vb{r}}({\theta_{\vb{r}}}^{-1}(t))=\theta_{\vb{r}}({\theta_{\vb{r}}}^{-1}(t'))=t'$ holds.\qed
\end{proof}

{This theorem claims that the syntactic monoid can be decomposed into the first component $\TK^{C_{P_1}\times\cdots\times C_{P_n}}$ and the second component $C_{P_1}\times\cdots\times C_{P_n}$ of the semidirect product stated in (ii). The significance of this decomposition is as follows.
For the second component $C_{P_1}\times\cdots\times C_{P_n}$, the last condition $\eta_L(w)\in \TK^{C_{P_1}\times\cdots\times C_{P_n}}\times \{\rho(w)\}$ in the statement of the theorem is crucial.
Intuitively, this condition says that the periodicity of the syntactic monoid is explicitly extracted as $C_{P_1}\times\cdots\times C_{P_n}$.
Next, let us consider the first component $\TK^{C_{P_1}\times\cdots\times C_{P_n}}$.
If we naively consider the behavior of each element in $M_L$ as an action on the set $M_L$, the corresponding transformation monoid is $\mathcal{T}_{|M_L|}$.
On the other hand, Theorem~\ref{th-main} states that by fixing a residual ${\bf r}$, each element can be described as an action on a set of size at most $K<|M_L|$, and the order of $\TK$ is
much smaller than that of $\mathcal{T}_{|M_L|}$.
More intuitively, to describe a periodic regular language, it suffices to have at most $K$ states for each residue ${\bf r}$.
Nevertheless, $\TK$ is still a large monoid.
The possibility of replacing the first component $\TK^{C_{P_1}\times\cdots\times C_{P_n}}$ with a simpler monoid needs further investigation.}

In this paper, the injective homomorphism $\Can:M_L\to \TK^{C_{P_1}\times\cdots\times C_{P_n}}$ in the proof of Theorem~\ref{th-main} is called the {\it canonical homomorphism}.

\begin{example}
As mentioned in Example \ref{ex-language}-(1), $L_1$ has period $2$ with respect to both of $\Gamma_1=\{a\}$ and $\Gamma_2=\{b\}$. Therefore, $M_{L_1}$ is a submonoid of $\mathcal{T}_1^{C_2\times C_2}\rtimes (C_2\times C_2)$.
(Where $K=1$ for $\mathcal{T}_K$ because $\max\{|N_{(r_1,r_2)}|\mid r_1,r_2\in C_2\}=\max\{1\}=1$.)
In fact, $M_{L_1}=C_2\times C_2$ is isomorphic to $\mathcal{T}_1^{C_2\times C_2}\rtimes (C_2\times C_2)$.\bqed
\end{example}
\begin{example}
As mentioned in Example \ref{ex-language}-(2), $L_2$ has period $2$ with respect to $\Gamma_1=\{a\}$. Therefore, $M_{L_2}$ is a submonoid of $\mathcal{T}_3^{C_2}\rtimes C_2$.
Note that $M_{L_2}=\mathcal{S}_3=C_3\rtimes_* C_2$ with a unitary action $*$ (see Example \ref{ex-semidirect}).
By Proposition \ref{prop-semidirect}, $M_{L_2}$ is a submonoid of $C_3^{C_2}\rtimes C_2$, and also a submonoid of $\mathcal{T}_3^{C_2}\rtimes C_2$.\bqed
\end{example}

The following is a corollary of Theorem \ref{th-main} for $n=1$ and $\Gamma_1=\Sigma$.
\begin{corollary}\label{cor-main1}
Let $L\subseteq \Sigma^*$ be a regular language, and $M_L$ and $\eta_L$ be the syntactic monoid and the syntactic morphism of $L$. The following conditions are equivalent:\begin{enumerate}[(i)]
\item $L$ has a period $P>1$ with respect to $\Sigma$.
\item $M_L$ is a submonoid of $\TK^{C_P}\rtimes C_P$ where $K=\max\{|\eta_L(\Sigma^r(\Sigma^P)^*)|\mid r\in C_P\}$. Furthermore, $\eta_L(w)\in \TK^{C_P}\times \{\rho(w)\}$ for each $w\in\Sigma^*$.\bqed
\end{enumerate}
\end{corollary}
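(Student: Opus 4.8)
The plan is to derive the corollary directly from Theorem~\ref{th-main} by specializing to $n = 1$, $\Gamma_1 = \Sigma$, and $P_1 = P$. Under this specialization the residual map $\rho\colon \Sigma^* \to C_P$ sends a word $w$ to $|w|_\Sigma \bmod P = |w| \bmod P$, so condition~(i) of the theorem becomes exactly ``$L$ has a period $P$ with respect to $\Sigma$,'' and the semidirect product $\TK^{C_{P_1}\times\cdots\times C_{P_n}} \rtimes (C_{P_1}\times\cdots\times C_{P_n})$ collapses to $\TK^{C_P} \rtimes C_P$. The membership clause $\eta_L(w) \in \TK^{C_P} \times \{\rho(w)\}$ transfers verbatim. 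Hence the only point requiring verification is that the constant $K$ named in the corollary coincides with the constant $K = \max\{\,|N_r| \mid r \in C_P\,\}$ furnished by the theorem, where $N_r = \eta_L(\{w \in \Sigma^* \mid \rho(w) = r\})$.

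To this end I would establish the set identity
\[
  \{\,w \in \Sigma^* \mid \rho(w) = r\,\} = \Sigma^r(\Sigma^P)^*
  \qquad (0 \le r < P).
\]
The left-hand side consists of all words whose length is congruent to $r$ modulo $P$; since $0 \le r < P$, the non-negative integers congruent to $r$ modulo $P$ are precisely $r, r + P, r + 2P, \ldots$, and these are exactly the lengths realized by the words of $\Sigma^r(\Sigma^P)^*$ (a prefix of length $r$ followed by finitely many blocks of length $P$). Applying the syntactic morphism to both sides gives $N_r = \eta_L(\Sigma^r(\Sigma^P)^*)$, whence $\max\{\,|N_r| \mid r \in C_P\,\} = \max\{\,|\eta_L(\Sigma^r(\Sigma^P)^*)| \mid r \in C_P\,\}$, matching the two descriptions of $K$.

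I expect no genuine obstacle, as the statement is a direct instantiation of Theorem~\ref{th-main}; the only place demanding any care is the length bookkeeping in the displayed identity, specifically confirming that the restriction $0 \le r < P$ makes $r$ itself the minimal admissible length, so that the prefix $\Sigma^r$ omits no word of length congruent to $r$. Once $N_r$ is identified with $\eta_L(\Sigma^r(\Sigma^P)^*)$, every clause of condition~(ii) of the corollary is condition~(ii) of Theorem~\ref{th-main} read off at $n = 1$, and the two-way equivalence follows immediately.
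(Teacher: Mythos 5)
Your proposal is correct and follows exactly the paper's route: the paper states this corollary as the instantiation of Theorem~\ref{th-main} with $n=1$ and $\Gamma_1=\Sigma$, with no further argument given. Your verification of the set identity $\{w\in\Sigma^*\mid \rho(w)=r\}=\Sigma^r(\Sigma^P)^*$, which reconciles the two descriptions of $K$, is precisely the (easy) detail the paper leaves implicit, and you carry it out correctly.
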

\begin{example}
As mentioned in Example \ref{ex-language}-(3), $L_3$ has period $2$ with respect to $\Sigma=\{a,b\}$. Therefore, $M_{L_3}$ is a submonoid of $\mathcal{T}_3^{C_2}\rtimes C_2$ and $\eta_{L_3}(\Sigma)\subseteq \mathcal{T}_3^{C_2}\times \{1\}$.
We can show that $M_{L_3}$ is a submonoid of $\overline{U}_2\times C_2$ by the mapping $\eta_{L_3}(a)\mapsto (\iota_1,1)$ and $\eta_{L_3}(b)\mapsto (\iota_2,1)$.
Furthermore, $\overline{U}_2\times C_2$ is a submonoid of $\overline{U}_2^{C_2}\rtimes C_2$ by Proposition \ref{prop-semidirect}, and also a submonoid of $\mathcal{T}_3^{C_2}\rtimes C_2$.\bqed
\end{example}

\subsection{Residual Monoids}
In this subsection, we focus on periods with respect to a fixed alphabet $\Sigma$. Let $L\subseteq\Sigma^*$ be a regular language that has a period $P$ with respect to $\Sigma$.
Our goal is to extract a monoid $T_r$ that recognizes $L_w=\{u\in (\Sigma^P)^*\mid wu\in L\}$ with $w\in \Sigma^r$ for each $0\le r<P$, where $T_r$ is designed to treat a word of length $P$ as a single letter.
Intuitively, $L_w$ is the `{\it periodic image}' of $L$ with residual $r=\rho(w)$.
For each $0\le r<P$, we define the subset $T_r\subseteq\TK$ as \[T_r=\{\tau\in \TK \mid \tau=f(r)\text{ with }(f,0)\in M_L\}~.\]
Note that $(f,0)$ is the abbreviation of $\Can^{-1}(f,0)$ for the canonical homomorphism $\Can:M_L\to \TK^{C_P}\rtimes C_P$ (see the first paragraph of Section 2.1).
We call $T_r$ the {\it residual monoid} (with residual $r$), and this definition is justified by the following fact.
\begin{lemma}
Let $L\subseteq \Sigma^*$ be a regular language that has a period $P$ with respect to $\Sigma$.
For each $0\le r<P$, the residual monoid $T_r$ forms a monoid.
\end{lemma}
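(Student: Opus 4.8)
The plan is to exhibit $T_r$ as the homomorphic image of a submonoid of $M_L$, so that it inherits a monoid structure automatically; this reduces the claim to the general fact (stated in Section~2.1) that the homomorphic image $h(M)$ of any monoid is a submonoid of the target. Throughout I identify $M_L$ with its image $\Can(M_L)\subseteq \TK^{C_P}\rtimes C_P$, as the definition of $T_r$ already does. The first step is to isolate the relevant submonoid. Since $\overline{\rho}:M_L\to C_P$ is a monoid homomorphism, its fibre over $0$, namely $N_{\vb{0}}=\overline{\rho}^{-1}(0)$, is a submonoid of $M_L$: it contains $e_{M_L}$ because $\overline{\rho}(e_{M_L})=0$, and it is closed under products. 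By the last clause of Theorem~\ref{th-main}, every $t\in N_{\vb{0}}$ satisfies $\Can(t)=(f_t,0)$, so the set $\{(f,0)\in M_L\}$ appearing in the definition is precisely $\Can(N_{\vb{0}})$, and consequently $T_r=\{f_t(r)\mid t\in N_{\vb{0}}\}$.

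Next I would introduce the evaluation map $\mathrm{ev}_r:\TK^{C_P}\to\TK$ sending $f\mapsto f(r)$, together with the composite $\Phi_r:N_{\vb{0}}\to\TK$ defined by $\Phi_r(t)=f_t(r)$, whose image is exactly $T_r$. It then suffices to prove that $\Phi_r$ is a monoid homomorphism. Preservation of the identity is immediate from the formula for $f_t$ in the proof of Theorem~\ref{th-main}: for $t=e_{M_L}$ one has $\overline{\rho}(t)=0$, so $f_{e_{M_L}}(r)(k)=\theta_r^{-1}(\theta_r(k)\cdot e_{M_L})=k$, whence $\Phi_r(e_{M_L})$ is the identity transformation of $\TK$.

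The crux is multiplicativity. For $t,t'\in N_{\vb{0}}$ I would compute $\Can(t)\cdot\Can(t')=(f_t,0)\cdot(f_{t'},0)$ directly from the semidirect-product operation \eqref{eq:semidirect*1}. Because the second coordinate is $0$, the induced action satisfies $0\circledast f_{t'}=f_{t'}$ by \eqref{eq:act-function*2} (the identity of $C_P$ acts trivially), so the product collapses to $(f_t\cdot f_{t'},\,0)$, where $\cdot$ is the pointwise product of $\TK^{C_P}$. Evaluating at $r$ gives $(f_t\cdot f_{t'})(r)=f_t(r)\otimes f_{t'}(r)$, and since $\Can$ is a homomorphism with $t\cdot t'\in N_{\vb{0}}$, this reads $\Phi_r(t\cdot t')=\Phi_r(t)\otimes\Phi_r(t')$. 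Hence $\Phi_r$ is a monoid homomorphism and $T_r=\Phi_r(N_{\vb{0}})$ is a submonoid of $\TK$.

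The step I expect to require the most care is the reduction of the semidirect-product multiplication on the residual-zero part to the pointwise multiplication of $\TK^{C_P}$: one must unwind both the definition of $\rtimes_*$ and that of the induced action $\circledast$, confirm that the $0$-component action is genuinely trivial, and keep the composition-order convention of $\otimes$ in $\TK$ consistent with the computation carried out in the proof of Theorem~\ref{th-main}. Everything else is a formal consequence of the principle that the image of a monoid under a homomorphism is again a monoid.
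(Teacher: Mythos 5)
Your proof is correct and takes essentially the same route as the paper's: its two computational cores---that $f_{e_{M_L}}(r)$ is the identity transformation, and that $(f_t,0)\cdot(f_{t'},0)=(f_t\cdot f_{t'},0)$ because the identity of $C_P$ acts trivially via $\circledast$---are exactly the paper's steps (i) and (ii). Packaging this as ``$T_r$ is the image of the monoid homomorphism $\Phi_r=\mathrm{ev}_r\circ\Can|_{N_{\vb{0}}}$ on the submonoid $N_{\vb{0}}=\overline{\rho}^{-1}(\vb{0})$'' is a harmless reorganization of the paper's direct verification that $T_r$ contains the identity of $\TK$ and is closed under composition.
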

\begin{proof}
  Because $T_r$ is a subset of the monoid $\TK$, it suffices to show that the monoid generated by $T_r$ is equal to $T_r$ itself.
  Therefore, we show that (i) $T_r$ has the identity element of $\TK$, and (ii) $T_r$ is closed under the composition $\circ$.

  (i): Let $M_L$ be the syntactic monoid of $L$, and let $\Can:M_L\to \TK^G\rtimes G$ be the canonical homomorphism.
  For the identity element $e$ of $M_L$, $\Can(e)=(f_e,0)$ satisfies that \[f_e(r)(k)=\begin{cases}
  {\theta_{{r}+0}}^{-1}(\theta_{r}(k)\cdot e)=\theta_r^{-1}(\theta_r(k))=k&\mbox{if~~~}k<|N_{r}|~,\\
  k&\mbox{if~~~}k\ge |N_{r}|
  \end{cases}
  \]
  for each $0\le r<P$. Therefore, $f_e(r)(k)=k$ for each $k\in K$, and $f_e(r)$ is the identity map on $K$.
  Because $\Can(e)=(f_e,0)\in M_L$, $T_r$ has the identity element $f_e(r)$.

  (ii): Let $f_1(r),f_2(r)\in T_r$ with $(f_1,0),(f_2,0)\in M_L$. Because\[(f_1,0)\cdot (f_2,0)=(f_1\cdot (0*f_2),0+0)=(f_1\cdot f_2,0),\] $(f_1\cdot f_2,0)\in M_L$ and $(f_1\cdot f_2)(r)\in T_r$. By the definition of the operation of the monoid $\TK^{C_P}$, $(f_1\cdot f_2)(r)=f_2(r)\circ f_1(r)\in T_r$.\qed
\end{proof}
Let $L\subseteq \Sigma^*$ be a regular language that has a period $P$ with respect to $\Sigma$.
Note that \begin{equation*}
L=\bigsqcup_{w\in \Sigma^{<P}}L\cap w(\Sigma^P)^*
\end{equation*}
where $\Sigma^{<P}=\bigcup_{0\le i<P}\Sigma^i$. That is, $L$ can be partitioned into $L\cap w(\Sigma^P)^*$ for each $w\in \Sigma^{<P}$.
We can show that each language $L_w=\{u\in (\Sigma^P)^*\mid wu\in L\}\subseteq (\Sigma^P)^*$ is fully recognized by the residual monoid $T_{\rho(w)}$.

\begin{theorem}\label{th-residualMonoid}
  Let $L\subseteq \Sigma^*$ be a regular language that has a period $P$ with respect to $\Sigma$.
  For each $w\in \Sigma^{<P}$, let $L_w=\{u\in (\Sigma^P)^*\mid wu\in L\}$ be the language over $\Sigma^P$.
  Then, the residual monoid $T_r$ fully recognizes $L_w$ where $r=\rho(w)$.
\end{theorem}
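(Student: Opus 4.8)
The plan is to exhibit an explicit surjective monoid homomorphism $\eta_r\colon(\Sigma^P)^*\to T_r$ together with an accepting set $S\subseteq T_r$ satisfying $L_w=\eta_r^{-1}(S)$. First I would define $\eta_r$ through the canonical homomorphism. For $u\in(\Sigma^P)^*$ we have $\rho(u)=0$, hence $\overline{\rho}(\eta_L(u))=0$ and $\Can(\eta_L(u))=(f,0)$ for a unique $f\in\TK^{C_P}$; set $\eta_r(u)=f(r)$, which lies in $T_r$ by the very definition of $T_r$. This map factors as $(\Sigma^P)^*\xrightarrow{\eta_L}N_0\to T_r$, where $N_0=\eta_L((\Sigma^P)^*)=\overline{\rho}^{-1}(0)$ is a submonoid of $M_L$ and the second arrow sends $t\in N_0$ with $\Can(t)=(f_t,0)$ to $f_t(r)$.

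Next I would verify that $\eta_r$ is a surjective homomorphism. For $s,s'\in N_0$ the trivial action of $0$ gives $(f_s,0)\cdot(f_{s'},0)=(f_s\cdot f_{s'},0)$, and evaluating the pointwise product at $r$ yields $(f_s\cdot f_{s'})(r)=f_s(r)\otimes f_{s'}(r)$; since the identity of $M_L$ maps to the identity transformation (as recorded in the lemma on $T_r$), the second arrow, and hence $\eta_r$, is a homomorphism. Surjectivity is immediate: every element of $T_r$ has the form $f(r)$ with $(f,0)=\Can(t)$ for some $t\in N_0$, and such $t$ equals $\eta_L(u)$ for some $u\in(\Sigma^P)^*$, whence $\eta_r(u)=f(r)$.

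The heart of the argument will be the single identity
\[\eta_L(wu)=\theta_r\bigl(\eta_r(u)(k_0)\bigr),\qquad k_0:=\theta_r^{-1}(\eta_L(w)),\]
valid for every $u\in(\Sigma^P)^*$. I would derive it directly from the definition of the maps $f_t$ in the proof of Theorem~\ref{th-main}. Writing $s=\eta_L(u)$, we have $\overline{\rho}(s)=0$, while $\overline{\rho}(\eta_L(w))=\rho(w)=r$ forces $\eta_L(w)\in N_r$ and hence $k_0<|N_r|$, so the first branch in the definition of $f_s$ applies:
\[\eta_r(u)(k_0)=f_s(r)(k_0)=\theta_{r+0}^{-1}\bigl(\theta_r(k_0)\cdot s\bigr)=\theta_r^{-1}\bigl(\eta_L(w)\cdot\eta_L(u)\bigr)=\theta_r^{-1}(\eta_L(wu)).\]
Applying $\theta_r$ gives the identity.

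Finally, letting $F=\eta_L(L)$ be the accepting set of the syntactic morphism, we have $wu\in L$ iff $\eta_L(wu)\in F$ iff $\theta_r(\eta_r(u)(k_0))\in F$. Hence taking $S=\{\tau\in T_r\mid \theta_r(\tau(k_0))\in F\}$ yields $L_w=\eta_r^{-1}(S)$, proving that $T_r$ fully recognizes $L_w$. The main obstacle I anticipate is purely bookkeeping: keeping straight the composition convention of $\TK$ (where $\tau_1\otimes\tau_2=\tau_2\circ\tau_1$) and the induced action $\circledast$, and confirming that the evaluation point $k_0$ does satisfy $k_0<|N_r|$ so that the nontrivial branch of $f_s$ is the one in play; once these are settled, the rest is a routine unfolding of the definitions.
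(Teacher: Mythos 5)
Your proposal is correct and takes essentially the same route as the paper: the same homomorphism $u\mapsto f(r)$ obtained from $\Can(\eta_L(u))=(f,0)$, the same accepting set $S=\{\tau\in T_r\mid \theta_r(\tau(\theta_r^{-1}(\eta_L(w))))\in\eta_L(L)\}$, and the same key computation unwinding $f_s(r)$ to get $\theta_r(\eta_r(u)(\theta_r^{-1}(\eta_L(w))))=\eta_L(wu)$. Your explicit checks that $\eta_L(w)\in N_r$ (so that $\theta_r^{-1}(\eta_L(w))<|N_r|$ and the nontrivial branch of $f_s$ applies) and that $\eta_r$ is a homomorphism are details the paper leaves implicit, relying on the preceding lemma on $T_r$.
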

\begin{proof}
  We show that there exists a surjective monoid homomorphism $\eta_w:(\Sigma^P)^*\to T_r$ such that $L_w=\eta_w^{-1}(S)$ for some $S\subseteq T_r$. Let $M_L$ and $\eta_L$ be the syntactic monoid and syntactic morphism of $L$, and let $\Can:M_L\to \TK^{C_P}\rtimes C_P$ be the canonical homomorphism.

  We define the homomorphism $\eta_w$ as $\eta_w(u)=f(r)$ with $\Can(\eta_L(u))=(f,0)$ for each $u\in (\Sigma^P)^*$. By the definition of $T_r$, $\eta_w$ is surjective. Next, we define $S=\{\tau\in T_r\mid \theta_r(\tau(\theta_r^{-1}(\eta_L(w))))\in \eta_L(L)\}$.
  We show that $L_w=\eta_w^{-1}(S)$.
  Let $u\in (\Sigma^P)^*$ and $\tau=\eta_w(u)$.
  Then, \begin{eqnarray*}
  \tau(\theta_r^{-1}(\eta_L(w)))&=&\eta_w(u)(\theta_r^{-1}(\eta_L(w)))\\
  &=&f(r)(\theta_r^{-1}(\eta_L(w)))\\
  &=&{\theta_{r+0}}^{-1}(\theta_r(\theta_r^{-1}(\eta_L(w)))\cdot \eta_L(u))\\
  &=&\theta_r^{-1}(\eta_L(w)\cdot\eta_L(u))\\
  &=&\theta_r^{-1}(\eta_L(wu))
\end{eqnarray*}
and therefore $\theta_r(\tau(\theta_r^{-1}(\eta_L(w))))=\eta_L(wu)$.
Thus,\begin{align*}
u\in \eta_w^{-1}(S)&\iff \tau\in S&\\
&\iff \theta_r(\tau(\theta_r^{-1}(\eta_L(w))))\in \eta_L(L)&\\
&\iff \eta_L(wu)\in \eta_L(L)&\\
&\iff wu\in L &\mbox{(by $\eta^{-1}_L(\eta_L(L))=L$)}\\
&\iff u\in L_w~,&
\end{align*}
that is, $L_w=\eta_w^{-1}(S)$.\qed
\end{proof}

For a regular language $L\subseteq \Sigma^*$ that has a period $P$ with respect to $\Sigma$, we let $L_w$ and $\eta_w$ denote the language and the homomorphism mentioned in Theorem~\ref{th-residualMonoid} for each $w\in \Sigma^{<P}$.
Also, let $M_{L_w}$ and $\eta_{L_w}$ be the syntactic monoid and the syntactic morphism of $L_w\subseteq (\Sigma^P)^*$.
By Theorem~\ref{th-residualMonoid} and Proposition~\ref{prop-recognize}, there is a surjective homomorphism $\psi:T_{\rho(w)}\to M_{L_w}$.
{The following commutative diagram illustrates the relationship among the above monoids}:
\begin{equation}
  \xymatrix{
  {(\Sigma^P)^*}\ar@{->>}[r]^-{\eta_w}\ar[rd]_-{\eta_{L_w}}\ar@{}<2.0ex>\ar@{->>}[rd]|{}&{T_{\rho(w)}}\ar@{->>}[d]^-{\psi}\\
  &{M_{L_w}~.}
  }\label{eq:diagram}
\end{equation}

\section{Probabilities of Regular Languages}
For a language $L\subseteq \Sigma^*$, the {\it probability} (or {\it density}) of $L$ for length $\ell$, denoted by $\mu_L(\ell)$, is defined as $|L\cap \Sigma^\ell |/|\Sigma^\ell|$.
Note that $0\le \mu_L(\ell)\le 1$ for every $L\subseteq \Sigma^*$ and $\ell\ge 0$.
The probability of $L$, denoted by $\mu_L$, is the behavior of $\lim_{\ell\to \infty}\mu_L(\ell)$. The probabilities of languages have been extensively studied. In particular, the following proposition was proved in the study of formal power series \cite{powerseries}.
\begin{proposition}\label{prop-accumulation}
  The probability $\mu_L$ of every regular language $L\subseteq \Sigma^*$ has only finitely many accumulation points. Precisely, there exists a sequence of numbers $(\mu_0,\ldots,\mu_{P-1})$ with some $P\ge 1$ such that $\lim_{\ell\to \infty}\mu_L(\ell\cdot P+r)=\mu_r$ for each $0\le r<P$.\bqed
\end{proposition}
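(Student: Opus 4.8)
\begin{proof*}{Proof proposal}
The plan is to reduce the statement to the asymptotics of the powers of a finite stochastic matrix attached to the syntactic monoid, and then to exploit the fact that the peripheral spectrum of such a matrix consists of roots of unity. First I would set up the counting. Writing $S=\eta_L(L)$ and $N_m(\ell)=|\{w\in\Sigma^\ell\mid \eta_L(w)=m\}|$ for each $m\in M_L$, we have $|L\cap\Sigma^\ell|=\sum_{m\in S}N_m(\ell)$ because $L=\eta_L^{-1}(S)$. Collect these counts into the row vector $v_\ell=(N_m(\ell))_{m\in M_L}$. Splitting a word of length $\ell+1$ as $wa$ with $w\in\Sigma^\ell$ and $a\in\Sigma$, and using $\eta_L(wa)=\eta_L(w)\cdot\eta_L(a)$, yields the linear recurrence $v_{\ell+1}=v_\ell A$, where $A_{m,m'}=|\{a\in\Sigma\mid m\cdot\eta_L(a)=m'\}|$. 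Since each letter $a$ determines exactly one target $m\cdot\eta_L(a)$, every row of $A$ sums to $|\Sigma|$, so $Q=\tfrac{1}{|\Sigma|}A$ is a row-stochastic matrix.

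With $\pi=v_0$ the indicator of $e_{M_L}$ (a probability distribution, as the empty word is the only word of length $0$) and $\mathbf{1}_S$ the indicator column vector of $S$, one obtains the closed form
\begin{equation*}
\mu_L(\ell)=\frac{1}{|\Sigma|^\ell}\sum_{m\in S}N_m(\ell)=\pi\,Q^\ell\,\mathbf{1}_S .
\end{equation*}
Thus $\mu_L(\ell)$ is exactly the distribution of a finite Markov chain on state set $M_L$ evaluated against $\mathbf{1}_S$, which is the point tying this proposition to the period theme of the paper. Next I would analyze $Q^\ell$ spectrally. Being stochastic, $Q$ has spectral radius $1$, every eigenvalue satisfies $|\lambda|\le 1$, and the entries of $Q^\ell$ lie in $[0,1]$ for all $\ell$. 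Decomposing $Q$ over its generalized eigenspaces, the contribution of every eigenvalue with $|\lambda|<1$ decays to $0$ as $\ell\to\infty$. The heart of the argument is the structure of the peripheral eigenvalues, those with $|\lambda|=1$: for a nonnegative matrix of spectral radius $1$ they are roots of unity, and---because $Q^\ell$ stays bounded---each of them carries only trivial ($1\times 1$) Jordan blocks (a nontrivial block would force entries of $Q^\ell$ to grow like $\ell$). Hence, writing $E_\lambda$ for the spectral projector of a peripheral eigenvalue $\lambda$,
\begin{equation*}
Q^\ell=\sum_{|\lambda|=1}\lambda^\ell E_\lambda+R_\ell, \qquad R_\ell\to 0\ \ (\ell\to\infty) .
\end{equation*}

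Finally I would extract the period. Let $P$ be the least common multiple of the orders of the peripheral eigenvalues as roots of unity. Then $\lambda^\ell$ depends only on $\ell \bmod P$, so for each residue $r$ the matrix $B_r=\sum_{|\lambda|=1}\lambda^{r}E_\lambda$ is well defined and $Q^{\ell P+r}=B_r+R_{\ell P+r}\to B_r$ as $\ell\to\infty$. Setting $\mu_r=\pi\,B_r\,\mathbf{1}_S$ gives $\lim_{\ell\to\infty}\mu_L(\ell P+r)=\mu_r$, which is the claim; finiteness of the set of accumulation points follows since every accumulation point lies in $\{\mu_0,\ldots,\mu_{P-1}\}$. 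The main obstacle is the spectral fact invoked above, namely that the peripheral eigenvalues of a finite nonnegative (here stochastic) matrix are roots of unity carrying only trivial Jordan blocks. I would obtain it either from Perron--Frobenius theory applied to each recurrent communicating class of $Q$ (each such class being periodic in the Markov-chain sense, with transient states dying out), or, in the spirit of the cited formal-power-series approach, from the rationality of $\sum_\ell|L\cap\Sigma^\ell|\,x^\ell$ together with a partial-fraction analysis of its dominant poles.
\end{proof*}
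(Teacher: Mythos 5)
Your proposal is correct, but it follows a genuinely different route from the paper, for the simple reason that the paper does not prove Proposition~\ref{prop-accumulation} at all: it quotes it as a known theorem from the formal-power-series literature \cite{powerseries}, where it is derived from the rationality of the generating series $\sum_{\ell}|L\cap\Sigma^\ell|\,x^\ell$ and the structure of the dominant poles of $\mathbb{N}$-rational series (each dominant pole is a root of unity divided by the convergence radius) --- i.e., the second of the two options you sketch in your final sentence. Your argument instead constructs the row-stochastic matrix $Q$ indexed by $M_L$ and analyzes its powers spectrally, and every step checks out: the recurrence $v_{\ell+1}=v_\ell A$ and the closed form $\mu_L(\ell)=\pi\,Q^\ell\,\mathbf{1}_S$ are correct; boundedness of the entries of $Q^\ell$ does rule out nontrivial Jordan blocks at peripheral eigenvalues; and the fact that peripheral eigenvalues of a nonnegative matrix of spectral radius $1$ are roots of unity follows block-by-block from the Frobenius normal form, as you indicate. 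What your route buys: it is self-contained modulo standard Perron--Frobenius theory, and it makes explicit, already at this point, the Markov chain that the paper only introduces afterwards in Section~4.1 --- your $Q$ is exactly the transition matrix $\Pi$ of Section~4.1 computed on the Cayley graph of $M_L$ rather than on an arbitrary DFA, and your $P$ (the lcm of the orders of the peripheral roots of unity) is the same quantity that the paper later controls via sink periods in Proposition~\ref{prop-sinkPeriod} and Theorem~\ref{th-maxperiod}, so your proof in effect anticipates that discussion. What the cited power-series route buys: it applies verbatim to arbitrary $\mathbb{N}$-rational sequences, not just to word counts of regular languages, and yields finer asymptotic information, which is why the paper can simply invoke it as a black box.
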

For a regular language $L\subseteq \Sigma^*$ that has $P$ accumulation points, we write $\mu_L=(\mu_0,\ldots,\mu_{P-1})$.
If $P=1$, we simply write $\mu_L=\mu_0$.
For example, the probability of $L_3=a(\Sigma\Sigma)^*\cup b\Sigma^*$  (see also Example~\ref{ex-language}-(3)) can be represented as $\mu_{L_3}=(\tfrac{1}{2},1)$ with two accumulation points.

\subsection{Periods and Markov Chains}
The probability of a regular language can be computed by a finite state Markov chain.
For a regular language $L\subseteq\Sigma^*$, let $\mathcal{M}$ be the finite state Markov chain (with uniform transition probabilities) defined as:\begin{itemize}
\item the state space is the set $Q$ of states of a DFA $\mathcal{A}$ that recognizes $L$, and
\item the transition matrix $\Pi\in [0,1]^{Q\times Q}$ is defined as $\Pi(q_i,q_j)=|\{a\in \Sigma \mid q_i\xrightarrow{a}q_j$ is a transition of $\mathcal{A}\}|/|\Sigma|$ for each $q_i,q_j\in Q$.
\end{itemize}
Then, it is clear that the probability $\mu_L(\ell)$ is equal to $\sum_{q\in F}\Pi^\ell (q_0,q)$ where $q_0$ and $F$ are the initial state and the set of final states of $\mathcal{A}$, respectively.
\begin{example}
Let $\mathcal{A}_3$ be the DFA that recognizes $L_3=a(\Sigma\Sigma)^*\cup b\Sigma^*$ defined in Example~\ref{ex-language}-(3). Then, the corresponding Markov chain $\mathcal{M}_3$ is as shown in Figure~4 and the transition matrix is \[
\Pi=\begin{pmatrix}
0&\tfrac{1}{2}&0&\tfrac{1}{2}\\
0&0&1&0\\
0&1&0&0\\
0&0&0&1\end{pmatrix}~.
\]
The probability of $L_3$ for length $\ell$ is equal to $\Pi^{\ell}(1,2)+\Pi^{\ell}(1,4)$ because $q_1$ and $\{q_2,q_4\}$ are the initial state and the set of final states of $\mathcal{A}_3$, respectively.
For example, $\mu_{L_3}(2)=\Pi^2(1,2)+\Pi^2(1,4)=1/2$. In fact, $|L_3\cap \Sigma^2|/|\Sigma^2|=|\{ba,bb\}|/4=1/2$.\bqed
\end{example}
{\centering
\begin{tikzpicture}[auto]
  \node [scale=0.65] (s0) [state] at (0,0) {\LARGE $q_3$};
  \node [scale=0.65] (s1) [state] at (1.4,0) {\LARGE $q_1$};
  \node [scale=0.65] (s2) [state] at (0,-1.4) {\LARGE $q_2$};
  \node [scale=0.65] (s3) [state] at (2.8,-1.4) {\LARGE $q_4$};

  \path [-stealth, thick]
    (s3) edge [loop above] node {$1$} ()
    (s1) edge [bend right] node [below] {$\tfrac{1}{2}$} (s3)
    (s1) edge [bend left] node [below] {$\tfrac{1}{2}$} (s2)
    (s0) edge node {$1$} (s2)
    (s2) edge [bend left] node {$1$} (s0);
\end{tikzpicture}
\\\vspace{5pt}\hspace{112pt}{\bf Figure 4.} Markov chain $\mathcal{M}_3$
}

\vspace{10pt}

A strongly connected component of a directed graph is said to be a {\it sink} if there are no outgoing edges from itself.
A sink $Q'$ of a Markov chain is said to be {\it $P$-periodic} if $P$ is the maximum number satisfying: $\ell$ is a multiple of $P$ if $\Pi^{\ell}(q,q)>0$ with some $q\in Q'$.
For example, Markov chain $\mathcal{M}_3$ (see Figure~4) has two sinks: $2$-periodic sink $\{q_2,q_3\}$ and $1$-periodic sink $\{q_4\}$.
Note that for a DFA $\mathcal{A}$ and the corresponding Markov chain $\mathcal{M}$, $\Pi^{\ell}(q,q)>0$ iff there is a run $q\xrightarrow{w}q$ for some $w\in \Sigma^\ell$.
Therefore, we also say that a sink $Q'$ of a DFA is $P$-periodic if $P$ is the maximum number satisfying: for each  $w\in\Sigma^*$, $|w|$ is a multiple of $P$ if $q\xrightarrow{w}q$ with some $q\in Q'$.
$P$-periodicity affects the limit distribution of the Markov chain. In fact, we can show that for the transition matrix $\Pi$ of every finite state Markov chain, the limit distribution $\lim_{\ell\to \infty}\Pi^\ell$ oscillates around $P$ accumulation points where $P$ is a divisor of the least common multiple of the periods of all sinks.
By the correspondence between the probability of a regular language and the limit distribution of the Markov chain, we have the following property.
\begin{proposition}\label{prop-sinkPeriod}
  Let $L\subseteq \Sigma^*$ be a regular language, and let $\mathcal{A}$ be a DFA that recognizes $L$.
  The probability of $L$ can be represented as $\mu_L=(\mu_0,\ldots,\mu_{P-1})$ with $\mu_0,\ldots,\mu_{P-1}\in [0,1]$ where $P$ is the least common multiple of
  $\{P'\in\mathbb{N}\mid Q'$ is $P'$-periodic where $Q'$ is a sink of $\mathcal{A}\}$.\bqed
\end{proposition}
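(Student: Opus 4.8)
The plan is to translate the statement into the language of the transition matrix $\Pi$ of the Markov chain $\mathcal{M}$ associated with $\mathcal{A}$, and then invoke the standard convergence theory of finite Markov chains (see e.g.\ \cite{Norris}). Since $\mu_L(\ell)=\sum_{q\in F}\Pi^\ell(q_0,q)$, it suffices to show that for $P=\mathrm{lcm}\{P'\mid Q'\text{ is a }P'\text{-periodic sink of }\mathcal{A}\}$ and each residue $0\le r<P$, the limit $\lim_{\ell\to\infty}\Pi^{\ell P+r}(q_0,q)$ exists for every $q\in Q$. Setting $\mu_r=\sum_{q\in F}\lim_{\ell\to\infty}\Pi^{\ell P+r}(q_0,q)$ would then give exactly the desired representation $\mu_L=(\mu_0,\ldots,\mu_{P-1})$, with the understanding that the number of distinct accumulation points may be a proper divisor of $P$ if some $\mu_r$ coincide.

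First I would set up the correspondence between the graph structure of $\mathcal{A}$ and the recurrent/transient decomposition of $\mathcal{M}$. Each sink $Q'$ of $\mathcal{A}$ is by definition a strongly connected component with no outgoing edges, so it forms a closed irreducible (recurrent) class of $\mathcal{M}$; conversely every recurrent class arises this way, and every state lying outside all sinks is transient. The crucial observation is that the $P'$-periodicity of a sink $Q'$, defined through the cycle lengths of runs $q\xrightarrow{w}q$ in $\mathcal{A}$, coincides with the Markov-chain period $\gcd\{\ell\mid \Pi^\ell(q,q)>0\}$ of the corresponding recurrent class, because (as noted immediately before the proposition) $\Pi^\ell(q,q)>0$ holds exactly when there is a run $q\xrightarrow{w}q$ with $|w|=\ell$.

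Next I would pass to the $P$-th power $\Pi^P$. Since every sink period $P'$ divides $P$, the standard cyclic decomposition of a $P'$-periodic recurrent class into $P'$ cyclic subclasses shows that, under $\Pi^P=(\Pi^{P'})^{P/P'}$, each subclass becomes an aperiodic irreducible recurrent class, while transient states of $\Pi$ remain transient under $\Pi^P$. Hence all recurrent classes of $\Pi^P$ are aperiodic, so by the convergence theorem for finite aperiodic Markov chains the limit $\Pi^\infty:=\lim_{\ell\to\infty}(\Pi^P)^\ell=\lim_{\ell\to\infty}\Pi^{\ell P}$ exists. Multiplying on the right by $\Pi^r$ then yields $\lim_{\ell\to\infty}\Pi^{\ell P+r}=\Pi^\infty\Pi^r$ for each $0\le r<P$, which is precisely the entrywise convergence required above.

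I expect the main obstacle to lie in the bookkeeping of the second paragraph: rigorously matching the paper's graph-theoretic notion of a $P'$-periodic sink with the probabilistic notion of the period of a recurrent class, and confirming that contributions from transient final states vanish in the limit (the total mass on transient states tends to $0$, so $\Pi^\ell(q_0,q)\to 0$ whenever $q$ is transient, even if $q\in F$). Once this dictionary is in place, the convergence itself is a routine consequence of the Perron--Frobenius theory underlying finite Markov chains, and the identity $\mu_L(\ell)=\sum_{q\in F}\Pi^\ell(q_0,q)$ transfers it directly to $\mu_L$.
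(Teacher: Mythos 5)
Your proposal is correct and takes essentially the same route as the paper, which states this proposition without detailed proof, deriving it from the identity $\mu_L(\ell)=\sum_{q\in F}\Pi^\ell(q_0,q)$ together with the standard fact that the limit distribution of a finite Markov chain oscillates around $P$ accumulation points, where $P$ divides the least common multiple of the periods of its sinks. Your argument via the cyclic-class decomposition, the power $\Pi^P$, and the aperiodic convergence theorem is precisely the standard proof of that fact, so you have simply filled in the details the paper leaves to the reader.
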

As can be seen from the definition, $P$-periodicities of sinks correspond to the periods of a language with respect to $\Sigma$ defined in Section 2.3.
Actually, our definition of period is inspired by the studies of Markov chains.
The next lemma describes this correspondence in detail.
\begin{lemma}\label{lem-syntPeriod}
Let $L\subseteq \Sigma^*$ be a regular language, and $M_L$ be the syntactic monoid of $L$.
If a $P$-periodic sink exists in the Cayley graph of $M_L$, then the maximum period of $L$ with respect to $\Sigma$ is a multiple of $P$.
\end{lemma}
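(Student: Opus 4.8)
The plan is to prove the equivalent but sharper statement that $P$ divides the length of \emph{every} loop in the Cayley graph of $M_L$, not just the loops based inside the sink $Q'$. This suffices: the maximum period of $L$ with respect to $\Sigma$ is the greatest common divisor of $\{|w| \mid t\cdot\eta_L(w)=t \text{ for some } t\in M_L\}$, so once $P$ is shown to divide all of these lengths, $P$ divides their $\gcd$, which is exactly the claim that the maximum period is a multiple of $P$.

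The key idea I would use is that an arbitrary loop can be transported into $Q'$ without changing its length, via left multiplication. First I would record a closure property of sinks: if $z\in Q'$, then for every $a\in\Sigma$ the edge $z\xrightarrow{a}z\cdot\eta_L(a)$ cannot leave $Q'$, so $z\cdot\eta_L(a)\in Q'$; by induction on word length, $z\cdot\eta_L(v)\in Q'$ for every $v\in\Sigma^*$. Since $\eta_L$ is surjective, every element of $M_L$ is of the form $\eta_L(v)$, and hence $z\,M_L\subseteq Q'$.

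Next I would take an arbitrary loop, i.e.\ $t\in M_L$ and $w\in\Sigma^*$ with $t\cdot\eta_L(w)=t$, and fix any $z\in Q'$. By the closure property $zt\in Q'$, and associativity gives $(zt)\cdot\eta_L(w)=z\cdot(t\cdot\eta_L(w))=zt$. Thus $zt\xrightarrow{w}zt$ is a loop of length $|w|$ lying entirely inside $Q'$, so $P$-periodicity of $Q'$ forces $P$ to divide $|w|$. As $t$ and $w$ range over all loops, $P$ divides every loop length, which finishes the argument by the reduction in the first paragraph.

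The proof should be short and elementary; the only step to state carefully is the sink-closure $z\,M_L\subseteq Q'$, which is where surjectivity of $\eta_L$ enters (every monoid element is spelled by some word). I do not expect a genuine obstacle here: the real content is simply the observation that left multiplication by $z$ is a length-preserving graph endomorphism of the Cayley graph that carries any fixed point of right multiplication by $\eta_L(w)$ to another fixed point, and that fixed point can be placed inside the given sink.
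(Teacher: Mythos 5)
Your proof is correct, and it rests on the same key mechanism as the paper's proof --- transporting an arbitrary loop $t\cdot\eta_L(w)=t$ into the sink by left multiplication by a sink element --- but your execution is genuinely more direct. The paper argues by contradiction: after forming $s\cdot t\in Q'$ it invokes the strong connectivity of the sink to obtain a return element $s'$ with $s\cdot t\cdot s'=s$, builds the family of loops $s\xrightarrow{uw^jv'}s$ for preimages $u\in\eta_L^{-1}(t)$ and $v'\in\eta_L^{-1}(s')$, and then chooses $j$ so that $|u|+j|w|+|v'|$ is not a multiple of $P$. You instead observe that $zt$ is itself a fixed point of right multiplication by $\eta_L(w)$, so $zt\xrightarrow{w}zt$ is already a loop based at a vertex of $Q'$, and $P\mid |w|$ follows at once from the definition of $P$-periodicity --- no return path, no choice of preimages, no contradiction. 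What your version buys: it uses only the closure of $Q'$ under right multiplication (the ``no outgoing edges'' half of being a sink) and never its strong connectivity, so it in fact proves the slightly stronger statement that any nonempty subset of $M_L$ closed under the Cayley-graph edges, all of whose based loops have length divisible by $P$, forces $P$ to divide every loop length in the whole graph. Your opening reduction --- the maximum period is the gcd of all loop lengths, so it suffices that $P$ divide each of them --- is sound and is exactly the reduction the paper makes implicitly when it writes ``it suffices to show that $|w|$ is always a multiple of $P$.''
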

\begin{proof}
  Let $Q'$ be a $P$-periodic sink of the graph.
  We assume that a word $w\in\Sigma^*$ satisfies $t\cdot \eta_L(w)=t$ for some $t\in M_L$ where $\eta_L$ is the syntactic morphism of $L$.
  By the definition of the period of a language, it suffices to show that $|w|$ is always a multiple of $P$.
  Suppose for contradiction that $|w|$ is not a multiple of $P$.
  Let $s$ be an arbitrary element of $Q'$.
  Because $Q'$ is a sink,
  $s\cdot t\in Q'$ and
  there is $s'\in M_L$ such that $s\cdot t\cdot s'=s$.
  Note that $s\cdot t\cdot \eta_L(w)\cdot s'=s\cdot t\cdot s'=s$ by the assumption $ t\cdot \eta_L(w)=t$.
  Therefore, for any $u\in \eta_L^{-1}(t)$ and $v'\in \eta_L^{-1}(s')$, it holds that $s\cdot \eta_L(u\cdot w^{i}\cdot v')=s$ for each $i\ge 0$.
  As we supposed that $|w|$ is not a multiple of $P$, there is $j\ge 1$ such that the length of $w'=uw^jv'$ is not a multiple of $P$.
  However, this contradicts the assumption that $Q'$ is $P$-periodic because $s\cdot \eta_L(w')=s$ and $|w'|$ is not a multiple of $P$.\qed
\end{proof}

This proof implies that for any sinks $Q',Q''\subseteq M_L$, if $Q'$ is $P'$-periodic, then $Q''$ is $P''$-periodic where $P''$ is a multiple of $P'$.
Therefore, the periodicity of each sink of the syntactic monoid is equal to each other.
For example, the Cayley graph of $M_{L_3}$ in Example~\ref{ex-language} has two sinks, and both are $2$-periodic.
For this reason, the Cayley graph of the syntactic monoid is convenient for discussing the periodicities and probabilities compared to the minimal DFA or other DFAs.
Moreover, by Lemma~\ref{lem-disjoint}, we see that the subsets $N_0,\ldots,N_{P-1}\subseteq M_L$ defined in Section~3 naturally correspond to the {\it periodic classes} in the context of Markov chains.

Note that the Cayley graph of the syntactic monoid $M_L$ of a regular language $L$ can be regarded as a DFA that recognizes $L$. By Proposition~\ref{prop-sinkPeriod} and Lemma~\ref{lem-syntPeriod}, we have the following fact.
\begin{theorem}\label{th-maxperiod}
  Let $L\subseteq \Sigma^*$ be a regular language, and $P$ be the maximum period of $L$ with respect to $\Sigma$. Then, the probability of $L$ can be represented as $\mu_L=(\mu_0,\ldots,\mu_{P-1})$ with $\mu_0,\ldots,\mu_{P-1}\in [0,1]$.\bqed
\end{theorem}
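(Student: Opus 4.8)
The plan is to regard the Cayley graph of the syntactic monoid $M_L$ as a DFA recognizing $L$ and apply Proposition~\ref{prop-sinkPeriod} to this particular DFA, reducing the whole statement to the single identification $P = P_0$, where $P$ is the maximum period of $L$ with respect to $\Sigma$ and $P_0$ is the common periodicity of the sinks of the Cayley graph of $M_L$. Indeed, Proposition~\ref{prop-sinkPeriod} applied to the Cayley-graph-as-DFA yields $\mu_L=(\mu_0,\ldots,\mu_{P'-1})$, where $P'$ is the least common multiple of the periods of all sinks. The remark following Lemma~\ref{lem-syntPeriod} shows that every sink of the Cayley graph of $M_L$ carries the same period $P_0$ (applying that remark symmetrically to any two sinks $Q',Q''$ gives $P'\mid P''$ and $P''\mid P'$, hence equality), so this least common multiple collapses to $P_0$. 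Once $P=P_0$ is in hand, we obtain $\mu_L=(\mu_0,\ldots,\mu_{P-1})$, which is the desired conclusion.

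It remains to establish $P=P_0$, which I would do by two divisibility arguments. First, $P_0\mid P$ is exactly the content of Lemma~\ref{lem-syntPeriod}: the Cayley graph of $M_L$ contains a $P_0$-periodic sink, so its maximum period $P$ with respect to $\Sigma$ is a multiple of $P_0$. For the converse $P\mid P_0$, I would use that, by the definition of period with $\Gamma=\Sigma$, the maximum period $P$ is precisely the greatest common divisor of the lengths $|w|$ ranging over all cycles $t\xrightarrow{w}t$ in the Cayley graph; in particular $P$ divides $|w|$ for every cycle lying inside a fixed sink $Q'$. Since $P_0$ is by definition the largest integer dividing the lengths of all cycles internal to $Q'$, any common divisor of those lengths---such as $P$---must divide $P_0$, giving $P\mid P_0$. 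Combining the two divisibilities yields $P=P_0$.

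The genuinely non-trivial ingredient, namely the direction $P_0\mid P$, is already packaged in Lemma~\ref{lem-syntPeriod}, so the theorem is essentially corollary-level assembly. The only care needed is the bookkeeping that matches the two notions of period: the \emph{language} period $P$ as a greatest common divisor taken over \emph{all} cycles of the Cayley graph, against the \emph{sink} period $P_0$ as a greatest common divisor over the cycles internal to a single sink. The conceptual point that this bookkeeping must secure is that cycles lying outside the sinks cannot force $P$ to be strictly smaller than $P_0$; this is guaranteed precisely because Lemma~\ref{lem-syntPeriod} supplies the reverse divisibility, pinning $P$ to $P_0$. A secondary point requiring attention is the collapse of the least common multiple in Proposition~\ref{prop-sinkPeriod} to a single value, which I would justify explicitly via the symmetric use of the post-Lemma remark described above.
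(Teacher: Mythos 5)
Your proof is correct and takes essentially the same route as the paper, which likewise regards the Cayley graph of $M_L$ as a DFA recognizing $L$ and combines Proposition~\ref{prop-sinkPeriod}, Lemma~\ref{lem-syntPeriod}, and the remark that all sinks of the Cayley graph share a common period. The converse divisibility $P\mid P_0$ you add (via the gcd characterization of the maximum period) is a clean way to make explicit the identification $P=P_0$ that the paper leaves implicit.
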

{
The number of accumulation points of the probability can be smaller than the maximum period $P$. For example, the maximum period of the language $L=a(\Sigma\Sigma)^*\cup b\Sigma(\Sigma\Sigma)^*\subseteq\{a,b\}^*$ with respect to $\Sigma$ is $2$, and $\mu_L=(\tfrac{1}{2},\tfrac{1}{2})$ holds. However, $\tfrac{1}{2}$ is the only accumulation point, and therefore, $\mu_L$ can also be represented as $\mu_L=\tfrac{1}{2}$.}

\subsection{Applications}
In this section, we give an application of the algebraic decomposition discussed in Section~3.

A monoid $M$ is said to be {\it zero} if there exists the zero element $\iota\in M$, i.e., $\iota\cdot m=\iota=m\cdot \iota$ for each $m\in M$.
Note that every monoid $M$ can have at most one zero element.
For a subset $I$ of a monoid $M$, $I$ is an {\it ideal} of $M$ if $IM=I=MI$.
For example, the singleton of the zero element $\{\iota\}\subseteq M$ is an ideal of every zero monoid $M$.
The following proposition can be easily shown (e.g., see Chapter II-3 of \cite{Pin}).
\begin{proposition}\label{prop-ideals}
Let $M$ and $N$ be monoids, and $\psi:M\to N$ be a surjective monoid homomorphism. Then, the followings hold.
\begin{enumerate}[(i)]
\item If $I$ is an ideal of $M$, then $\psi(I)$ is an ideal of $N$. In particular, if $\iota$ is the zero element of $M$, then $\psi(\iota)$ is the zero element of $N$.
\item If $J$ is an ideal of $N$, then $\psi^{-1}(J)$ is an ideal of $M$. In particular, if $\iota$ is the zero element of $N$, then $\psi^{-1}(\iota)$ is an ideal of $M$.
\item Let $I$ be a non-empty ideal of $M$. Then, $M'=M\setminus I\cup \{\iota\}$ forms a monoid with the zero element $\iota$ where the operation $\times$ of $M'$ is defined as\[
s_1\times s_2=\begin{cases}
s_1\cdot s_2&\text{if }s_1,s_2,s_1\cdot s_2\in M\setminus I\\
\iota&\text{otherwise}
\end{cases}
\]for each $s_1,s_2\in M'$. \bqed
\end{enumerate}
\end{proposition}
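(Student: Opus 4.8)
The plan is to treat the three parts in order, relying only on the defining identity $IM=I=MI$ of an ideal, the surjectivity of $\psi$ (which gives $N=\psi(M)$), and the set-level homomorphism identity $\psi(X)\psi(Y)=\psi(XY)$ for subsets $X,Y\subseteq M$. For part (i) I would simply compute $\psi(I)\,N=\psi(I)\,\psi(M)=\psi(IM)=\psi(I)$ and, symmetrically, $N\,\psi(I)=\psi(MI)=\psi(I)$, so $\psi(I)$ is an ideal of $N$. For the special case, note that a zero element $\iota$ spans the singleton ideal $\{\iota\}$ because $\iota m=\iota=m\iota$; applying the general statement shows $\{\psi(\iota)\}=\psi(\{\iota\})$ is a singleton ideal of $N$, and a singleton ideal $\{z\}$ is precisely a zero element since $zn=z=nz$ must hold for all $n\in N$.

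For part (ii), the cleanest route is a membership argument: if $x\in\psi^{-1}(J)$ and $m\in M$, then $\psi(xm)=\psi(x)\,\psi(m)\in JN\subseteq J$, so $xm\in\psi^{-1}(J)$; this gives $\psi^{-1}(J)\,M\subseteq\psi^{-1}(J)$, while the reverse inclusion follows at once from $e_M\in M$, and the symmetric computation yields $M\,\psi^{-1}(J)=\psi^{-1}(J)$. The zero-element case is then immediate by specializing to $J=\{\iota\}$.

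Part (iii) is the only part that needs real care, and I would avoid a brute-force case analysis on $\times$ by routing everything through the Rees congruence. Define $m\sim m'$ iff $m=m'$ or $m,m'\in I$. This is an equivalence relation, and it is a congruence: if $m,m'\in I$ then $mx,m'x\in IM\subseteq I$ and $xm,xm'\in MI\subseteq I$ for every $x\in M$, so $\sim$ is respected on both sides. Hence the quotient $M/{\sim}$ is a monoid, and identifying each singleton class with the corresponding element of $M\setminus I$ and the class $I$ with $\iota$ shows that $(M',\times)$ is exactly $M/{\sim}$; associativity of $\times$ and well-definedness are then inherited from $M$ for free. Finally I would identify the zero as $\iota$ (any product that meets $I$ collapses to $\iota$, and $\iota\times s=\iota=s\times\iota$ by definition), and split into two cases for the identity: if $I=M$ then $M'=\{\iota\}$ is the trivial monoid, in which $\iota$ is at once identity and zero; otherwise $e_M\notin I$ (since $e_M\in I$ would force $m=e_M m\in IM=I$ for all $m$, i.e.\ $I=M$), so $e_M\in M\setminus I$ is the identity of $M'$.

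The hard part is exactly the associativity in (iii); passing to the Rees congruence makes it a one-line consequence of associativity in $M$ and simultaneously certifies that $\times$ is well defined, so the only remaining subtlety is the degenerate case $I=M$, which must be isolated so that the claim about $e_M$ being the identity is not vacuously broken.
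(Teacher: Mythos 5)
Your proof is correct, but there is nothing in the paper to compare it against: the paper states this proposition without proof, marking it as a known fact and pointing to Chapter II-3 of \cite{Pin} (the construction in (iii) being the classical Rees factor monoid, attributed to \cite{Rees}). Your argument is the standard one and is complete. The set-level identity $\psi(X)\psi(Y)=\psi(XY)$ together with $N=\psi(M)$ disposes of (i) in two lines, the membership argument handles (ii) (including the easy reverse inclusion via $e_M$), and routing (iii) through the Rees congruence $m\sim m'$ iff $m=m'$ or $m,m'\in I$ is exactly the right move: it makes associativity and well-definedness of $\times$ consequences of the congruence property rather than a case analysis, and the congruence property itself is a one-line use of $IM\subseteq I$ and $MI\subseteq I$. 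Two details you handle that the bare statement glosses over are worth keeping: the degenerate case $I=M$, where $M'=\{\iota\}$ and $\iota$ is simultaneously identity and zero, and the observation that otherwise $e_M\in I$ would force $I=M$, so $e_M$ survives in $M\setminus I$ and serves as the identity of $M'$. The singleton-ideal characterization of zero elements in (i) is likewise exactly what justifies the "in particular" clauses.
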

The monoid $M'$ in Proposition~\ref{prop-ideals}-(iii) is called the {\it Rees factor monoid} of $M$ modulo $I$, which originates from the study of semigroups by Rees (\cite{Rees}).

The following interesting fact is known as a characterization of zero syntactic monoid.
\begin{proposition}[\cite{zero-one}]\label{prop-zeroone}
  Let $L\subseteq \Sigma^*$ be a regular language. The following conditions are equivalent:
\begin{enumerate}[(i)]
  \item $\mu_L=0$ or $\mu_L=1$.
  \item The syntactic monoid of $L$ is zero.\bqed
\end{enumerate}
\end{proposition}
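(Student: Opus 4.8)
The plan is to route everything through the finite Markov chain $\mathcal{M}$ built on the Cayley graph of $M_L$, which (as observed in Section~4.1) recognizes $L$ with initial state $e_{M_L}$ and accepting set $S=\eta_L(L)$, so that $\mu_L(\ell)=\sum_{q\in S}\Pi^\ell(e_{M_L},q)$. The algebraic ingredient I would isolate first is the identification of the sinks of the Cayley graph with the minimal right ideals of $M_L$: from a state $m$ the reachable set is the right ideal $mM_L$, and a standard argument shows that a sink is exactly a minimal right ideal. By the structure of finite semigroups (e.g.\ \cite{Pin}), the minimal two-sided ideal $I$ of $M_L$ is precisely the union of the minimal right ideals, so $I=\bigcup\{Q'\mid Q'\text{ is a sink of the Cayley graph}\}$. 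Finally, since $e_{M_L}\cdot M_L=M_L$, every state, and in particular every sink, is reachable from $e_{M_L}$, so the total absorption probability into the union of the sinks tends to $1$.

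For the direction (ii)$\Rightarrow$(i), suppose $\iota$ is the zero element of $M_L$. Then $I=\{\iota\}$, so $\{\iota\}$ is the \emph{unique} sink, and in a finite Markov chain with a single reachable recurrent class the mass is absorbed there with probability tending to $1$: $\Pi^\ell(e_{M_L},\iota)\to 1$ and $\Pi^\ell(e_{M_L},q)\to 0$ for $q\neq\iota$. Plugging this into $\mu_L(\ell)=\sum_{q\in S}\Pi^\ell(e_{M_L},q)$ gives $\mu_L=1$ when $\iota\in S$ and $\mu_L=0$ when $\iota\notin S$.

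For (i)$\Rightarrow$(ii) I would first reduce the number of cases by complementation: $L$ and $\overline{L}$ share the same syntactic monoid (with accepting sets $S$ and $M_L\setminus S$), so ``$M_L$ is zero'' is invariant, while $\mu_{\overline{L}}(\ell)=1-\mu_L(\ell)$ turns the hypothesis $\mu_L=1$ into $\mu_{\overline L}=0$; hence it suffices to treat $\mu_L=0$. Assuming $\mu_L=0$, I claim that every sink $Q'$ is disjoint from $S$. Indeed, using Proposition~\ref{prop-sinkPeriod} to decompose $\mu_L(\ell)$ as a sum over sinks of (absorption probability into $Q'$) times the in-sink occupation of $S$, a reachable sink meeting $S$ would contribute an amount bounded away from $0$ along a subsequence of lengths, contradicting $\lim_\ell\mu_L(\ell)=0$. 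Therefore $I=\bigcup Q'$ is disjoint from $S$. The syntactic minimality of $M_L$ now collapses $I$: for any $x,y\in I$ and any $a,b\in M_L$ we have $axb,ayb\in I\subseteq M_L\setminus S$, so $axb\in S\iff ayb\in S$ holds vacuously; thus $x$ and $y$ are indistinguishable by every context and hence equal in $M_L$. Consequently $|I|=1$, and its unique element is a zero of $M_L$.

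The main obstacle is the claim inside (i)$\Rightarrow$(ii) that $\mu_L=0$ forces every reachable sink to avoid $S$. Making this rigorous requires the limiting analysis of the finite Markov chain behind Proposition~\ref{prop-sinkPeriod}, namely that $\lim_\ell\Pi^\ell$ splits into absorption probabilities into each sink multiplied by the (possibly periodic) stationary behavior inside it, together with the elementary but necessary observation that a sink intersecting $S$ keeps contributing a positive amount to $\mu_L(\ell)$ infinitely often. Once this analytic statement is in place, both the identification $I=\bigcup\text{sinks}$ and the final syntactic collapse are short, so the whole argument hinges on cleanly transferring the analytic condition $\mu_L\in\{0,1\}$ to the algebraic condition $I\subseteq S$ or $I\cap S=\emptyset$.
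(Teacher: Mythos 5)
Your proposal is correct, but there is nothing in this paper to compare it against: Proposition~\ref{prop-zeroone} is an imported result, cited from \cite{zero-one} and stated here without proof. Judged on its own, your argument holds up at every step. The identification of sinks of the Cayley graph with minimal right ideals is sound (for $m\in Q'$ the reachable set is $mM_L$, and strong connectivity gives minimality), the kernel $I$ of the finite monoid $M_L$ is indeed the union of the minimal right ideals (standard, cf.\ \cite{Pin}), a zero element forces $I=\{\iota\}$ and hence a unique, reachable, aperiodic sink, which gives (ii)$\Rightarrow$(i) by absorption; conversely, after the legitimate complementation reduction (the syntactic congruences of $L$ and its complement coincide), $\mu_L=0$ forces every sink to avoid $\eta_L(L)$, so $I\cap\eta_L(L)=\emptyset$, and the context-separation property of the syntactic monoid (if $x\ne y$ in $M_L$ then some context $a\,\underline{\ }\,b$ distinguishes them relative to $\eta_L(L)$) collapses $I$ to a single element, which is then a zero. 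The analytic step you flag is genuinely needed --- Proposition~\ref{prop-sinkPeriod} alone does not supply it --- but it is standard finite Markov chain theory and closes cleanly: pick a positive-probability path of some length $T$ from $e_{M_L}$ to $q\in Q'\cap\eta_L(L)$; since $\Pi^{kp}(q,q)$ converges to a positive limit along multiples of the period $p$ of $Q'$, the bound $\Pi^{T+kp}(e_{M_L},q)\ge\Pi^{T}(e_{M_L},q)\,\Pi^{kp}(q,q)$ keeps $\mu_L(T+kp)$ bounded away from $0$, contradicting $\mu_L=0$. For comparison, the route taken in the cited reference \cite{zero-one} reaches the same algebraic pivot (the minimal ideal) by purely combinatorial means: if $\iota$ is a zero and $\eta_L(w)=\iota$, then every word containing $w$ as a factor satisfies $\eta_L(uwv)=\iota$, and the density of words containing a fixed factor tends to $1$, which settles (ii)$\Rightarrow$(i) with no Markov limit theory at all. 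Your version costs more probabilistic machinery but integrates naturally with the Markov chain picture this paper develops in Section~4.1; the combinatorial version is more elementary and self-contained.
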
\vspace{5pt}
This proposition implies that if the probability of a regular language $L$ has only one accumulation point $\mu_0$, then we can determine whether $\mu_0\in\{0,1\}$ by examining the existence of the zero element in the syntactic monoid.
We would like to generalize this result to determine whether $\mu_r\in\{0,1\}$ for arbitrary $0\le r<P$ when $\mu_L$ is represented as $(\mu_0,\ldots,\mu_{P-1})$.
Remember that $L_w=\{u\in (\Sigma^P)^*\mid wu\in L\}$ is a language over $\Sigma^P$.
Because \[
\mu_r=\sum_{w\in \Sigma^r}\frac{1}{|\Sigma|^r}\mu_{L_w}\] holds by a simple discussion on probabilities,
we have $\mu_r=1$ iff $\mu_{L_w}=1$ for each $w\in\Sigma^r$, and $\mu_r=0$ iff $\mu_{L_w}=0$ for each $w\in\Sigma^r$.
Therefore, a naive way to determine whether $\mu_r\in\{0,1\}$ is to construct the syntactic monoid of $L_w\subseteq (\Sigma^P)^*$ and examine the existence of the zero element for each $w\in\Sigma^r$.
Nevertheless, we already obtain the monoid $T_r$ that fully recognizes $L_w$ (see Theorem~\ref{th-residualMonoid}), and thus we can directly determine whether $\mu_r\in\{0,1\}$ without constructing the syntactic monoid of each $L_w$.
Note that $r = \rho(w)$ is the residual of $w$ and hence we can uniformly use the same $T_r$ for all $w$ that have the same residual.
{By the diagram \eqref{eq:diagram} and Proposition~\ref{prop-zeroone}, we can show the following Theorem~\ref{th-ex-zeroone}. Intuitively, the $r$-th accumulation point of the probability is either $0$ or $1$ iff there exists an ideal consisting only of either rejecting states or accepting states in the $r$-th periodic class of the syntactic monoid.}

\begin{theorem}\label{th-ex-zeroone}
  Let $L\subseteq \Sigma^*$ be a regular language, and let $P$ be the maximum period with respect to $\Sigma$. For each $w\in \Sigma^r$ with $0\le r<P$, the limit $\mu_{L_w}=\lim_{\ell\to\infty}\mu_{L_w}(\ell)$ exists and the following conditions are equivalent:
\begin{enumerate}[(i)]
  \item $\mu_{L_w}=0$ or $\mu_{L_w}=1$.
  \item There exists a non-empty ideal $I$ of $T_r$ such that either $I\cap \eta_w(L_w)=\emptyset$ or $I\subseteq \eta_w(L_w)$ holds where $\eta_w$ is the homomorphism in the diagram \eqref{eq:diagram}.
\end{enumerate}
\end{theorem}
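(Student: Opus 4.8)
The plan is to collapse both conditions to the single statement ``$M_{L_w}$ is zero'' and then apply Proposition~\ref{prop-zeroone}. First I must secure the existence of $\mu_{L_w}$. Since $L_w$ is regular over the alphabet $\Sigma^P$, Theorem~\ref{th-maxperiod} (applied to $L_w$ and $\Sigma^P$) writes $\mu_{L_w}$ as a tuple indexed by the maximum period $P'$ of $L_w$ with respect to $\Sigma^P$, so it suffices to show $P'=1$. I would argue this from the maximality of $P$: identifying $T_r$ with the right-multiplication action of the residual-$0$ elements on $N_r$ (as in the proof of Theorem~\ref{th-residualMonoid}), the length-$P$ block cycles inside a $P$-periodic sink of $M_L$ have block-lengths of gcd $1$ (the standard fact that the $P$-step restriction of a $P$-periodic strongly connected graph to one periodic class is aperiodic), and these project, via the surjection $\psi:T_r\to M_{L_w}$ of diagram~\eqref{eq:diagram}, to cycles of $M_{L_w}$ whose block-lengths have gcd $1$. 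Hence $P'=1$, the limit $\mu_{L_w}$ exists, and by Proposition~\ref{prop-zeroone} condition (i) is equivalent to ``$M_{L_w}$ is zero.'' Throughout I write $S=\eta_w(L_w)\subseteq T_r$; since $L_w=\eta_w^{-1}(S)=\eta_w^{-1}(\psi^{-1}(\eta_{L_w}(L_w)))$ and $\eta_w$ is surjective, I get the useful identity $S=\psi^{-1}(\eta_{L_w}(L_w))$.

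For (i)$\Rightarrow$(ii), suppose $M_{L_w}$ has a zero element $\iota$. I would set $I=\psi^{-1}(\iota)$, which is a non-empty ideal of $T_r$ by Proposition~\ref{prop-ideals}-(ii). The alternative in (ii) is then decided by whether $\iota$ is accepting: if $\iota\in\eta_{L_w}(L_w)$ then $I=\psi^{-1}(\iota)\subseteq\psi^{-1}(\eta_{L_w}(L_w))=S=\eta_w(L_w)$, while if $\iota\notin\eta_{L_w}(L_w)$ then $I\cap S=\psi^{-1}(\{\iota\}\cap\eta_{L_w}(L_w))=\emptyset$. Either way (ii) holds.

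For (ii)$\Rightarrow$(i), let $I$ be a non-empty ideal of $T_r$ with $I\cap\eta_w(L_w)=\emptyset$ or $I\subseteq\eta_w(L_w)$. By Proposition~\ref{prop-ideals}-(i), $J=\psi(I)$ is a non-empty ideal of $M_{L_w}$, and using $S=\psi^{-1}(\eta_{L_w}(L_w))$ the acceptance-uniformity transfers to $J$: in the two cases one gets $J\cap\eta_{L_w}(L_w)=\emptyset$ and $J\subseteq\eta_{L_w}(L_w)$ respectively. The key claim is that $J$ is a singleton, whose element is then a zero of $M_{L_w}$ (a singleton ideal $\{\iota\}$ forces $m\iota=\iota=\iota m$ for all $m$). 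To see this, take $m,m'\in J$ and any context $p,q\in M_{L_w}$; as $J$ is an ideal, $pmq,pm'q\in J$, so by uniformity both lie in $\eta_{L_w}(L_w)$ or both lie outside it, whence $pmq\in\eta_{L_w}(L_w)\iff pm'q\in\eta_{L_w}(L_w)$. Since this holds for every context, $m$ and $m'$ are syntactically equivalent, so $m=m'$ in the syntactic monoid $M_{L_w}$. Thus $M_{L_w}$ is zero and (i) follows by Proposition~\ref{prop-zeroone}.

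The main obstacle is precisely this reverse direction: (ii) only hands us a non-empty ideal with uniform acceptance, and being an ideal does not by itself produce a zero element — the decisive leverage is the \emph{minimality} of the syntactic monoid, which forces any uniformly-accepting ideal to collapse to one element. A secondary delicate point is the existence of $\mu_{L_w}$, i.e.\ verifying $P'=1$; here I expect the care to go into the residual-monoid description of $T_r$ and the aperiodicity of the $P$-step restriction within a periodic class, both resting on the maximality of $P$.
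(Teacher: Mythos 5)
Your equivalence argument is sound, but your existence argument has a genuine gap at the projection step. The cycles that the aperiodicity fact supplies are fixed points of an \emph{action}: $s\cdot\eta_L(v)=s$ with $s$ in a periodic class of a sink, which under your identification reads $\tau(x)=x$ for the transformation $\tau=\eta_w(v)\in T_r$ and the point $x=\theta_{r}^{-1}(s)\in K$. The surjection $\psi:T_r\to M_{L_w}$ is a monoid homomorphism: it transports equations between monoid elements, not fixed points of transformations on $K$. From $\tau(x)=x$ you cannot conclude $t\cdot\psi(\tau)=t$ for some $t\in M_{L_w}$, and the latter is what a cycle in the Cayley graph of $M_{L_w}$ --- hence the maximum period $P'$ of $L_w$ over $\Sigma^P$ --- is about. (Passing to idempotent powers only yields cycles of block-length $j\cdot|v|/P$ for some $j\ge 1$ depending on $v$, which destroys the gcd-$1$ property you need.) The claim $P'=1$ is nevertheless true, and your strategy can be repaired by working with monoid equations throughout: every sink $Q'$ is closed under right multiplication and hence meets $N_0$; the map $N_0\to M_{L_w}$ sending $\eta_L(v)\mapsto\eta_{L_w}(v)$ for $v\in(\Sigma^P)^*$ is a well-defined monoid homomorphism (if $\eta_L(v_1)=\eta_L(v_2)$ then $wxv_1y\in L\iff wxv_2y\in L$ for all $x,y\in(\Sigma^P)^*$, i.e.\ $\eta_{L_w}(v_1)=\eta_{L_w}(v_2)$); and applying it to cycles $s\cdot\eta_L(v)=s$ based at $s\in Q'\cap N_0$ --- which the $P$-step aperiodicity of $Q'$ restricted to the class $Q'\cap N_0$ supplies with coprime block-lengths, since every sink is exactly $P$-periodic by Lemma~\ref{lem-syntPeriod} and the trivial converse --- produces genuine Cayley cycles of $M_{L_w}$ with coprime block-lengths, so $P'=1$. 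You should also know that the paper proves existence by an entirely different two-line argument: if some $\mu_{L_w}$ failed to converge, then $\mu_L$ could not be written as a $P$-tuple, contradicting Theorem~\ref{th-maxperiod}.

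On the equivalence itself: your (i)$\Rightarrow$(ii) is the paper's proof (the saturation identity $\eta_w(L_w)=\psi^{-1}(\eta_{L_w}(L_w))$, which you make explicit, is used there implicitly). Your (ii)$\Rightarrow$(i) is correct but genuinely different. The paper builds the Rees factor monoid $T_r\setminus I\cup\{\iota\}$, checks that it still fully recognizes $L_w$, and then transports its zero onto $M_{L_w}$ via Propositions~\ref{prop-recognize} and~\ref{prop-ideals}-(i). You instead push the ideal forward to $J=\psi(I)$, an ideal of $M_{L_w}$ with uniform acceptance, and collapse $J$ to a single element --- necessarily a zero --- using minimality of the syntactic monoid: elements of a uniformly-accepting ideal are context-indistinguishable with respect to $\eta_{L_w}(L_w)$, hence equal. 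That reducedness property is not stated in the paper, but it does follow from Proposition~\ref{prop-recognize} (quotient $M_{L_w}$ by context-equivalence; the quotient still fully recognizes $L_w$, so by minimality the quotient map is injective). Your route trades the Rees construction for this reducedness fact and stays entirely inside $M_{L_w}$; both are legitimate.
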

\begin{proof}
  If $\lim_{\ell\to\infty}\mu_{L_w}(\ell)$ oscillates for some $w\in\Sigma^r$, $\mu_L$ cannot be represented as $(\mu_0,\ldots,\mu_{P-1})$ and this contradicts Theorem~\ref{th-maxperiod}.
  Therefore, $\mu_{L_w}=\lim_{\ell\to\infty}\mu_{L_w}(\ell)$ exists. We show (i)$\Leftrightarrow$(ii).

  (i)$\Rightarrow$(ii): Let $M_{L_w}$ be the syntactic monoid of $L_w$ and let $\iota$ be the zero element of $M_{L_w}$, of which existence is guaranteed by Proposition~\ref{prop-zeroone}.
  Let $I=\psi^{-1}(\iota)$ where $\psi$ is the homomorphism in the diagram \eqref{eq:diagram}. Then, $I$ is an ideal of $T_r$ by Proposition~\mbox{\ref{prop-ideals}-(ii)}.
  Now, if $\iota\in \psi\circ \eta_w(L_w)$, then $I=\psi^{-1}(\iota)\subseteq \eta_w(L_w)$.
  If $\iota\notin \psi\circ \eta_w(L_w)$, then $\{\iota\}\cap \psi\circ \eta_w(L_w)=\emptyset$ and $I\cap \eta_w(L_w)=\emptyset$.

  (ii)$\Rightarrow$(i): Let $I$ be a non-empty ideal of $T_r$, and let $T_r'=T_r\setminus I\cup\{\iota\}$ be the Rees factor monoid modulo $I$ (see Proposition~\ref{prop-ideals}-(iii)).
  It is clear that the mapping $\phi:T_r\to T_r'$ defined as \[
    \phi(s)=\begin{cases}
    s&\text{if }s\in T_r\setminus I,\\
    \iota&\text{if }s\in I
  \end{cases}
    \]is a surjective homomorphism.
    For $S=\eta_w(L_w)$, it holds that $L_w=(\phi\circ \eta_w)^{-1}(S\setminus I\cup\{\iota\})$ if $I\subseteq S$, and $L_w=(\phi\circ \eta_w)^{-1}(S\setminus I)=(\phi\circ \eta_w)^{-1}(S)$ if $I\cap S=\emptyset$.
    Therefore, $T_r'$ also fully recognizes $L_w$ with $\phi\circ \eta_w$.
    By Proposition~\ref{prop-recognize}, there is a surjective homomorphism $\psi'$ from $T_r'$ into the syntactic monoid $M_{L_w}$.
    Because $T_r'$ is zero, $M_{L_w}$ is also zero by Proposition~\ref{prop-ideals}-(i). This implies that $\mu_{L_w}\in\{0,1\}$ by Proposition~\ref{prop-zeroone}.\qed
\end{proof}

\begin{example}
Let $L_3=a(\Sigma\Sigma)^*\cup b\Sigma^*$ (see also Example~\ref{ex-language}-(3)). Note that the maximum period with respect to $\Sigma$ is $2$, and the probability of $L_3$ is $\mu_{L_3}=(\tfrac{1}{2},1)$.
For any $w \in \Sigma^*$, we abbreviate $(L_3)_w$ as $L_w$ in the following.
We have $\mu_{L_\varepsilon}\notin \{0,1\}$ where $L_\varepsilon=ba(\Sigma\Sigma)^*\cup bb(\Sigma\Sigma)^*$, and $\mu_{L_a}=\mu_{L_b}=1$ where $L_a=L_b=(\Sigma\Sigma)^*$.
In fact, $T_1$ is the trivial group $\{e\}$, and $\{e\}$ itself is an ideal satisfying $\{e\}=\eta_a(L_a)=\eta_b(L_b)$.
On the other hand, $T_0=\overline{U}_2=\{e,\iota_1,\iota_2\}$ and $\eta_{L_\varepsilon}(L_\varepsilon)$ contains only one of either $\iota_1$ or $\iota_2$.
Because the only non-empty ideals of $\overline{U}_2$ are $\overline{U}_2$ itself and $\{\iota_1,\iota_2\}$, the condition (ii) in Theorem~\ref{th-ex-zeroone} cannot be satisfied.
Intuitively, the three states in the top row of Figure 2--3 correspond to the periodic class $T_0$, while the two states in the bottom row correspond to the periodic class $T_1$.\bqed
\end{example}

\section{Krohn-Rhodes Decompositions}
We get back to general periods, including the cases that $\Gamma\subsetneq \Sigma$. In Section~3, we showed that every language with periods can be decomposed into a semidirect product with cyclic groups. In this section, we show that every language with periods also can be decomposed into another kind of product, called the wreath product, with cyclic groups.

For a right action $*:X\times M\to X$, we say that $(X,M)$ is a {\it transformation monoid}\footnote{This is a conventional term. A transformation monoid is the pair of an action and a monoid, but not a monoid itself. Nevertheless, similarly to the existence of a monoid simulating a DFA, there exists a monoid `isomorphic' to each transformation monoid.}. For example, $(M,M)$ is a transformation monoid for every monoid $M$ with the monoid operation $*$ of $M$. A transformation monoid is said to be finite if both of $X$ and $M$ are finite.
For two transformation monoids $(X,M)$ and $(Y,N)$, a pair $(\phi,\psi)$ of $\phi:X\to Y$ and $\psi:M\to N$ is a {\it homomorphism} if (i) $\psi$ is a monoid homomorphism, and (ii) $\phi(x*m)=\phi(x)*\psi(m)$ for each $x\in X,m\in M$.
We say that $(\phi,\psi)$ is surjective if both of $\phi$ and $\psi$ are surjective.
If there are a subset $X'$ of $X$ and a submonoid $M'$ of $M$ with a surjective homomorphism $(X',M')$ to $(Y,N)$, then $(Y,N)$ is called a {\it divisor} of $(X,M)$.

For two transformation monoids $(X,M)$ and $(Y,N)$, the transformation $(X\times Y,M^Y\rtimes N)$ where the action $*:(X\times Y)\times (M^Y\rtimes N)\to (X\times Y)$ is defined as\begin{equation}
(x,y)*(f,n)=(x* f(y),y*n) \label{eq:wreathAction}
\end{equation}
is called the {\it wreath product} of $(X,M)$ and $(Y,N)$, and denoted by $(X,M)\wr(Y,N)$. The following property is well known.
\begin{proposition}
  Let $(X_1,M_1),(X_2,M_2),(X_3,M_3)$ be transformation monoids. Then, $((X_1,M_1) \wr (X_2,M_2)) \wr (X_3,M_3)$ is isomorphic to $(X_1,M_1)\wr ((X_2,M_2)\wr (X_3,M_3))$.\bqed
\end{proposition}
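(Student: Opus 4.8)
The plan is to prove the associativity of the wreath product by unfolding both sides of the claimed isomorphism down to explicit transformation monoids and exhibiting a natural identification between them. Writing $(X_i,M_i)$ for the three given transformation monoids, I would first compute the underlying set and acting monoid of each side. The left-hand side $((X_1,M_1)\wr(X_2,M_2))\wr(X_3,M_3)$ has state set $(X_1\times X_2)\times X_3$ and acting monoid $(M_1^{X_2}\rtimes M_2)^{X_3}\rtimes M_3$, while the right-hand side $(X_1,M_1)\wr((X_2,M_2)\wr(X_3,M_3))$ has state set $X_1\times(X_2\times X_3)$ and acting monoid $M_1^{X_2\times X_3}\rtimes(M_2^{X_3}\rtimes M_3)$. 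The obvious candidate for the state bijection is the canonical reassociation $\phi\colon((x_1,x_2),x_3)\mapsto(x_1,(x_2,x_3))$.

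Next I would construct the monoid isomorphism $\psi$ between the two acting monoids. An element of the left monoid is a pair $(F,m_3)$ with $F\colon X_3\to M_1^{X_2}\rtimes M_2$, so $F(x_3)=(g_{x_3},h(x_3))$ where $g_{x_3}\colon X_2\to M_1$ and $h\colon X_3\to M_2$; an element of the right monoid is a pair $(G,(h,m_3))$ with $G\colon X_2\times X_3\to M_1$ and $h\colon X_3\to M_2$. The natural definition is $\psi(F,m_3)=(G,(h,m_3))$ where $G(x_2,x_3)=g_{x_3}(x_2)$. This is clearly a bijection (currying), so the bulk of the work is checking that $\psi$ respects the two semidirect-product multiplications, using the closed form \eqref{eq:closedform} together with the definition \eqref{eq:act-function*2} of the induced pointwise action at each level.

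Once $\psi$ is defined, I would verify the compatibility condition $\phi(s*u)=\phi(s)*\psi(u)$ for a state $s$ and monoid element $u$, expanding both sides with the wreath action \eqref{eq:wreathAction} applied twice on each side. Concretely, acting on the left evaluates $F$ at $x_3$ first and then the resulting $(M_1^{X_2}\rtimes M_2)$-element at $x_2$, whereas on the right one evaluates the $(M_2^{X_3}\rtimes M_3)$-component at $x_3$ to move $x_2$ and evaluates $G$ at $(x_2,x_3)$; after reindexing via $\phi$ these produce the same triple in $X_1\times X_2\times X_3$. The main obstacle, and the step demanding the most care, is the homomorphism check for $\psi$: the second components of the two wreath levels interact through the induced action, so one must track precisely how $m_2\in M_2$ twists the $M_1^{X_2}$ coordinate on the left versus how the combined $(M_2^{X_3}\rtimes M_3)$ element twists the $M_1^{X_2\times X_3}$ coordinate on the right. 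Careful bookkeeping of the argument of $G$ after the action—showing that the shift $x_2\mapsto x_2*m_2$ at the inner level matches the shift produced by the outer element on the right—is where a naive computation can slip, so I would isolate that identity as a separate displayed equation and confirm it before assembling the full proof.
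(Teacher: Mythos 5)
The paper offers no proof of this proposition at all: it is stated as a well-known fact and closed immediately with a black square, so there is nothing on the paper's side to compare your argument against. Your outline is the standard currying argument, and it is correct: the state bijection $\phi((x_1,x_2),x_3)=(x_1,(x_2,x_3))$ and the monoid map $\psi(F,m_3)=(G,(h,m_3))$ with $G(x_2,x_3)=g_{x_3}(x_2)$, where $F(x_3)=(g_{x_3},h(x_3))$, are exactly the right choices. The homomorphism identity you single out as the delicate step does go through, and since you deferred it, here it is in closed form: writing $F'(x_3)=(g'_{x_3},h'(x_3))$, the product $(F,m_3)\cdot(F',m_3')$ on the left has first component $x_3\mapsto F(x_3)\cdot F'(x_3*m_3)$, whose $M_1^{X_2}$-part evaluated at $x_2$ is $g_{x_3}(x_2)\cdot g'_{x_3*m_3}(x_2*h(x_3))$; on the right, the induced action of $(h,m_3)$ on $G'\in M_1^{X_2\times X_3}$ is precisely the shift by the wreath action \eqref{eq:wreathAction}, namely
\begin{equation*}
\bigl((h,m_3)\circledast G'\bigr)(x_2,x_3)\;=\;G'\bigl(x_2*h(x_3),\,x_3*m_3\bigr)\;=\;g'_{x_3*m_3}\bigl(x_2*h(x_3)\bigr),
\end{equation*}
so both $\psi((F,m_3)\cdot(F',m_3'))$ and $\psi(F,m_3)\cdot\psi(F',m_3')$ have first component $(x_2,x_3)\mapsto g_{x_3}(x_2)\cdot g'_{x_3*m_3}(x_2*h(x_3))$ and second component $\bigl(x_3\mapsto h(x_3)h'(x_3*m_3),\,m_3m_3'\bigr)$. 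The compatibility $\phi(s*u)=\phi(s)*\psi(u)$ is then the one-line computation you describe, both sides being $\bigl(x_1*g_{x_3}(x_2),(x_2*h(x_3),x_3*m_3)\bigr)$. The only caveat is that what you wrote is a plan rather than a finished proof; with the displayed identity above filled in, it is complete.
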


Krohn and Rhodes showed the {\it decomposition theorem} of finite monoids (\cite{Krohn-Rhodes}).

\begin{proposition}[Krohn-Rhodes]\label{prop-Krohn}
  Every finite transformation monoid $(X,M)$ is a divisor of a wreath product of the form\begin{equation*}
  (Y_1,N_1)\wr\cdots\wr(Y_k,N_k)
\end{equation*}
  where each $(Y_i,N_i)$ with $1\le i\le k$ is $(U_2,U_2)$ or $(G,G)$ with a nontrivial group $G$ dividing $M$.\bqed
\end{proposition}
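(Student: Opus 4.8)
The plan is to prove this by the \emph{holonomy method}, which builds the wreath product by stratifying the action of $M$ on $X$ along the lattice of its images. Throughout I would work with $(X,M)$ after adjoining an identity to $M$ (this affects neither the action essentially nor divisibility). First I would form the set of images $\mathcal{I}=\{Xm\mid m\in M\}\cup\{X\}$, partially ordered by inclusion, and assign to each $B\in\mathcal{I}$ a \emph{height} equal to the length of a longest chain of images strictly below $B$; singletons lie at the bottom and $X$ at the top. The $i$-th coordinate of the target wreath product will be responsible for the behaviour ``inside'' images of height $i$, so this stratification is the skeleton of the whole construction.

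Second, to each image $B$ I would attach its \emph{holonomy group} $H_B$: the group of permutations induced on the maximal proper subimages of $B$ by those $m\in M$ with $Bm=B$ (such $m$ necessarily restrict to a bijection of $B$). The key structural point is that the set of such $m$ is a subsemigroup of $M$ and that $H_B$ is a homomorphic image of it, so $H_B$ is a finite group \emph{dividing} $M$ in the sense defined above. Images of the same height that are mutually reachable carry isomorphic holonomy groups, so only finitely many groups occur, one per equivalence class per level.

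Third, I would assemble the covering level by level, from the top down. A point $x\in X$ is coded by the descending chain of images of decreasing height that locate it, and each $m\in M$ is coded by the tuple of its induced per-level actions. I would show that $(X,M)$ is a divisor of the wreath product of the per-level transformation semigroups, where each level is a ``permutation-reset'' semigroup that splits into its group part $(H_B,H_B)$ and flip-flop resets $(U_2,U_2)$: the group part records the genuine permutations of subimages, while a collapse of one image into a strictly smaller one carries no permutation and is modelled faithfully by a reset. The semidirect structure of the wreath product---precisely the action \eqref{eq:wreathAction}, in which the transformation applied at a level depends on the current position at higher levels---is exactly what makes these nested, position-dependent actions composable. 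Since the statement asks only for nontrivial groups dividing $M$ (not simple groups), no Jordan--H\"older refinement of the $H_B$ is needed, and discarding trivial holonomy groups leaves a wreath product whose factors are all $(U_2,U_2)$ or $(G,G)$ with $G=H_B$ nontrivial and dividing $M$.

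The main obstacle is this third step: constructing the covering homomorphism $(\phi,\psi)$ from a subset-and-submonoid of the wreath product onto $(X,M)$ and verifying the intertwining identity $\phi(x*(f,n))=\phi(x)*\psi(f,n)$ simultaneously across all levels. One must check that an element of $M$ acts consistently on the nested chain of images coding a point, that equivalent images of the same height are handled by a single holonomy group, and that every collapse between images of different heights is reproduced by the flip-flop coordinates. This bookkeeping over the image lattice is the technical heart of the holonomy theorem; the group-theoretic and reset components, by contrast, are routine once the covering is in place.
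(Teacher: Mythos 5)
The paper itself offers no proof of Proposition~\ref{prop-Krohn}: it is quoted as a classical theorem, with the original proof attributed to \cite{Krohn-Rhodes} and modern semigroup-theoretic proofs to \cite{localDivisor,q-theory}, so there is no in-paper argument to compare yours against. Your chosen route, the holonomy method, is the decomposition the paper cites separately as \cite{Eilenberg,Cascade}; it is a legitimate, standard way to prove the theorem, so the strategy is sound. Your second step is also essentially correct as stated: the stabilizer $\{m\in M\mid Bm=B\}$ is a submonoid of $M$ (it contains the identity), brick-images of bricks are again bricks, and the induced permutation group $H_B$ is therefore a quotient of that submonoid, which is exactly the paper's notion of $H_B$ dividing $M$.

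Nevertheless the proposal has genuine gaps beyond mere compression. (1) Your image set $\mathcal{I}=\{Xm\mid m\in M\}\cup\{X\}$ in general contains no singletons at all --- for instance, if $M$ is a group then every element acts bijectively, so $Xm=X$ for all $m$ and $\mathcal{I}=\{X\}$ --- hence the assertion that ``singletons lie at the bottom'' fails, there are no descending chains with which to code points, and the stratification collapses precisely in the case where nontrivial group factors are most needed; Eilenberg's construction adjoins $\{\{x\}\mid x\in X\}$ to $\mathcal{I}$ exactly to prevent this. (2) A fixed height generally carries several equivalence classes of images, so the level component is a permutation--reset transformation monoid whose group part is a \emph{direct product} of the holonomy groups of those classes; a direct product of divisors of $M$ need not divide $M$ (e.g.\ $C_2\times C_2$ does not divide $C_2$), so to reach the form demanded by the proposition you must further split each level into an iterated wreath product of the individual $(H_B,H_B)$'s and of $(U_2,U_2)$'s, using facts such as $(G_1\times G_2,G_1\times G_2)\prec(G_1,G_1)\wr(G_2,G_2)$ and the decomposition of many-point resets into flip-flops; your sketch treats each level as if it carried a single group and calls the reset part ``routine,'' which hides this step. (3) Most importantly, your third step --- constructing $(\phi,\psi)$ on a subset and submonoid of the wreath product and verifying $\phi(x*(f,n))=\phi(x)*\psi(f,n)$ across all levels --- is not a technicality to be deferred: it \emph{is} the holonomy theorem, and you explicitly leave it undone. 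As written, the proposal identifies the correct scaffolding but does not yet contain a proof; completing it amounts to reproducing the holonomy decomposition of \cite{Eilenberg,Cascade}.
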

Because every DFA can be regarded as a transformation monoid, Krohn-Rhodes theorem is also referred to as the decomposition theorem of regular languages.
Nevertheless, known proofs for Proposition~\ref{prop-Krohn} are purely semigroup-theoretic (e.g., \cite{localDivisor,q-theory}), and there is a gap between the proofs and formal language theory.
Moreover, a finite monoid is not always the syntactic monoid of a language. For example, it is explained in \cite{Lawson} that $U_K$ with $K\ge 3$ is not a syntactic monoid (see Example~\ref{ex-basic_monoids} for the definition of $U_K$).
For these reasons, there is a need for studies on Krohn-Rhodes decompositions for syntactic monoids.
As an application of the results of this paper, we give a partial Krohn-Rhodes decomposition of syntactic monoids of periodic regular languages.
More precisely, for any periodic regular language $L$ that has periods $P_1, \ldots, P_n$, $(M_L, M_L)$ is a divisor of $(\TK, \TK) \wr (G, G)$ for some $K$ where $G$ is the direct product of the cyclic groups $C_{P_i}$ for $1\le i\le n$.

\begin{theorem}\label{th-main2}
  Let $L\subseteq \Sigma^*$ be a regular language and $M_L$ be the syntactic monoid of $L$. Let $L$ have periods $P_1,\ldots,P_n> 1$ with respect to $\Gamma_1,\ldots,\Gamma_n\subseteq \Sigma$, respectively. Then, $(M_L,M_L)$ is a divisor of the wreath product of the form\[
  (\TK,\TK)\wr (G,G)
  \] where $G=C_{P_1}\times \cdots\times C_{P_n}$ and $K=\max\{|N_{\vb{r}}|\mid \vb{r}\in G\}$. Furthermore, $G$ divides $M_L$.
\end{theorem}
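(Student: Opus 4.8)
The plan is to read the divisor structure directly off Theorem~\ref{th-main}. First observe that the monoid component of $(\TK,\TK)\wr(G,G)$ is exactly $\TK^{G}\rtimes G$ with $G=C_{P_1}\times\cdots\times C_{P_n}$, which is precisely the monoid into which Theorem~\ref{th-main} embeds $M_L$ via the canonical homomorphism $\Can$. So I would take $M'=\Can(M_L)$ as the submonoid of $\TK^{G}\rtimes G$ and $\psi=\Can^{-1}:M'\to M_L$ as the (bijective, hence surjective) monoid homomorphism; this settles the monoid half of the divisor. It then remains to supply the set half: a subset $X'$ of $\TK\times G$ that is closed under the action of $M'$, together with a surjection $\phi:X'\to M_L$ making $(\phi,\psi)$ a homomorphism of transformation monoids.

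For the set part I would encode each element of $M_L$ by its ``coordinates''. Writing $c_k\in\TK$ for the constant transformation with value $k$, define $\iota:M_L\to\TK\times G$ by $\iota(t)=(c_{k_t},\overline{\rho}(t))$, where $k_t=\theta_{\overline{\rho}(t)}^{-1}(t)<|N_{\overline{\rho}(t)}|\le K$. This map is injective because $t=\theta_{\overline{\rho}(t)}(k_t)$ is recoverable from $(k_t,\overline{\rho}(t))$, so I set $X'=\iota(M_L)$ and $\phi=\iota^{-1}$. The single identity that makes everything work is
\[
\iota(t)*\Can(s)=\iota(t\cdot s)\qquad(t,s\in M_L),
\]
where $*$ is the wreath action \eqref{eq:wreathAction}. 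Indeed, unfolding the left-hand side with $\Can(s)=(f_s,\overline{\rho}(s))$ gives $(c_{k_t}*f_s(\overline{\rho}(t)),\,\overline{\rho}(t)+\overline{\rho}(s))$, and since the action of $\TK$ on itself satisfies $c_k*\tau=\tau\circ c_k=c_{\tau(k)}$, the first coordinate is $c_{f_s(\overline{\rho}(t))(k_t)}$. The defining formula for $f_s$ (the same computation used to prove $\Can$ is a homomorphism in Theorem~\ref{th-main}) evaluates $f_s(\overline{\rho}(t))(k_t)=\theta_{\overline{\rho}(t)+\overline{\rho}(s)}^{-1}(\theta_{\overline{\rho}(t)}(k_t)\cdot s)=\theta_{\overline{\rho}(t\cdot s)}^{-1}(t\cdot s)=k_{t\cdot s}$, so the left-hand side equals $\iota(t\cdot s)$. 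This identity simultaneously shows that $X'$ is closed under the action of $M'$ and that $\phi(\iota(t)*\Can(s))=t\cdot s=\phi(\iota(t))\cdot\psi(\Can(s))$, i.e. $(\phi,\psi)$ is a surjective homomorphism of transformation monoids. Hence $(M_L,M_L)$ is a divisor of $(\TK,\TK)\wr(G,G)$.

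Finally, for the claim ``$G$ divides $M_L$'' I would use the residual homomorphism $\overline{\rho}:M_L\to G$ introduced after Lemma~\ref{lem-disjoint}: as a surjective monoid homomorphism onto $G$ it exhibits $G$ as a quotient of $M_L$, and a quotient of $M_L$ is in particular a divisor of $M_L$. The only point requiring genuine verification here is the surjectivity of $\overline{\rho}$, equivalently that the residues $\{\rho(a)\mid a\in\Sigma\}$ generate the whole group $G$, and I expect this to be the main obstacle: a redundant or dependent choice of the $\Gamma_i$ can push the image of $\overline{\rho}$ into a proper subgroup of $G$, so this step must lean on the $\Gamma_i$ being chosen so that the periods are genuinely independent. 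Everything else is a direct repackaging of Theorem~\ref{th-main}, and once the surjectivity of $\overline{\rho}$ is secured, no further structural work is needed.
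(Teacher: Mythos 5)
Your main argument is correct and is essentially the paper's own proof. The paper likewise takes the monoid half to be the submonoid $\Can(M_L)$ with $\psi=\Can^{-1}$, and embeds $M_L$ into the state set $\TK\times G$; the only difference is the encoding of states: the paper uses $\hat\phi(t)=(f_t(\vb{0}),\overline{\rho}(t))$ and recovers $t$ as $\theta_{\overline{\rho}(t)}\bigl(f_t(\vb{0})(\theta_{\vb{0}}^{-1}(e))\bigr)$, whereas you use the constant map $c_{k_t}$ at the coordinate $k_t=\theta_{\overline{\rho}(t)}^{-1}(t)$. Both verifications reduce to the same identity $\theta_{\vb{c}+\vb{r}}(g(\vb{c})(k))=\theta_{\vb{c}}(k)\cdot\psi(g,\vb{r})$ coming from the definition of $\Can$, so your closure-plus-homomorphism computation is a cosmetic variant of the paper's, not a different route.

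Where your proposal departs from the paper is the final clause, and there your worry is justified: it exposes a genuine gap in the paper's own proof, which handles the claim in one sentence (``Because $\overline{\rho}:M_L\to G$ is a surjective homomorphism, $G$ is a divisor of $M_L$'') without ever proving surjectivity. Surjectivity of $\overline{\rho}$ is equivalent to the residue vectors $\{\rho(a)\mid a\in\Sigma\}$ generating $G$, and this can fail under the stated hypotheses. For instance, take $L=(\Sigma\Sigma)^*$ over $\Sigma=\{a,b\}$ with $\Gamma_1=\Gamma_2=\Sigma$ and $P_1=P_2=2$: the hypotheses of the theorem hold, yet $\rho(\Sigma^*)$ is the diagonal subgroup of $G=C_2\times C_2$, and $G$ (order $4$) cannot divide $M_L=C_2$ (order $2$), since any divisor of a monoid $M$ has at most $|M|$ elements. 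Nor is the problem confined to repeating a $\Gamma_i$: for the paper's own language $L_1$ with the three distinct sets $\Gamma_1=\{a\}$, $\Gamma_2=\{b\}$, $\Gamma_3=\{a,b\}$, each with (maximal) period $2$, the image of $\rho$ is the index-two subgroup $\{(x,y,z)\mid z=x+y\}$ of $C_2^3$, so again $G=C_2^3$ does not divide $M_{L_1}=C_2\times C_2$. Hence the ``Furthermore'' part of the statement requires either an added independence hypothesis (that $\{\rho(a)\mid a\in\Sigma\}$ generates $G$), exactly as you suspected, or a weakening to: the subgroup $\overline{\rho}(M_L)$ of $G$ divides $M_L$. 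The divisor half of the theorem is unaffected, since neither your argument nor the paper's uses surjectivity of $\overline{\rho}$ there.
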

\begin{proof}
  Because $\overline{\rho}:M_L\to G$ is a surjective homomorphism, $G$ is a divisor of $M_L$. It suffices to show that there exists a surjective homomorphism $(\phi,\psi)$ from $(\TK,\TK)\wr (G,G)=(\TK\times G,\TK^G\rtimes G)$ to $M_L$.

  Let $\Can:M_L\to \TK^G\rtimes G$ be the canonical homomorphism. Because $\Can$ is an injective homomorphism, the inverse $\psi$ of $\Can$ is a partial surjective homomorphism from $\TK^G\rtimes G$ to $M_L$. Let $\hat\phi(t)=(f(\vb{0}),\overline{\rho}(t))$ for $\Can(t)=(f,\overline{\rho}(t))$,
  then $\hat\phi$ is an injective homomorphism from $M_L$ to $\TK\times G$ by the same reason as the proof of Theorem~\ref{th-main}. Therefore, the inverse $\phi$ of $\hat\phi$ is a partial surjective mapping from $\TK\times G$ to $M_L$.
  Specifically, \begin{equation}\phi(f(\vb{0}),\vb{c})={\theta_{\vb{c}}}(f(\vb{0})({\theta_{\vb{0}}}^{-1}(e)))\label{eq:phi}
  \end{equation} for each $(f(\vb{0}),\vb{c})\in \TK\times G$.

  We show that $\phi(x* m)=\phi(x)\cdot\psi(m)$ for each $x\in \TK\times G$ and $m\in \TK^G\rtimes G$. Let $x=(f(\vb{0}),\vb{c})$ and $m=(g,\vb{r})$.
  By the definition of $\Can$, it holds that $g(\vb{c})(k)={\theta_{\vb{c}+\vb{r}}}^{-1}(\theta_{\vb{c}}(k)\cdot \psi(m))$ for each $0\le k<K$, and hence,
  \begin{equation} \theta_{\vb{c}+\vb{r}}(g(\vb{c})(k))=\theta_{\vb{c}}(k)\cdot \psi(m)\label{eq:transformationAction}
  \end{equation}
  holds. Then, \begin{align*}
    \phi(x)\cdot\psi(m)&=\theta_{\vb{c}}(f(\vb{0})({\theta_0}^{-1}(e)))\cdot \psi(m)\\
    &={\theta_{{\vb{c}}+\vb{r}}}(g(\vb{c})(f(\vb{0})({\theta_0}^{-1}(e))))&\tag{\text{by \eqref{eq:transformationAction}}}\\
    &=\phi(g(\vb{c})\circ f(\vb{0}),\vb{c}+\vb{r})\tag{\text{by \eqref{eq:phi}}}\\
    &=\phi((f(\vb{0}),{\vb{c}})*(g,\vb{r})) \tag{\text{by \eqref{eq:wreathAction}}}\\
    &=\phi(x* m)~.
  \end{align*}
  Therefore $(M_L,M_L)$ is a divisor of $(\TK,\TK)\wr (G,G)$.\qed
\end{proof}
{Note that this theorem provides only a partial decomposition: the monoid $\TK$ is not represented as a wreath product of monoids of the form $(U_2,U_2)$ and groups in general.
For this partial decomposition to be useful, the structure embedded in $\TK$ needs to be simpler than $M_L$. This aspect is related to the group complexity, which was introduced in~\cite{groupComplexity}.
The smallest number of components $(Y_i,N_i)$ of the form $(G,G)$ in Proposition~\ref{prop-Krohn} over all possible decompositions is called the {\it group complexity} of $(X,M)$.
There are several studies for computing group complexity, but any complete algorithm has not yet been obtained (see e.g. \cite{q-theory} for detail).
Therefore, if the decomposition given in Theorem~\ref{th-main2} leads to the minimal decomposition, that is, if the minimal decomposition of $(M_L,M_L)$ can be represented as \[
(Y_1,N_1)\wr\cdots\wr(Y_k,N_k)\wr (G,G)
\] with $G=C_{P_1}\times\cdots\times C_{P_n}$, our result is beneficial for computing the group complexity. Proving this is left for future work, but it appears to be challenging.}

\section{Conclusions}
We have provided an algebraic decomposition of regular languages with periods using cyclic groups and semidirect products. Furthermore, we have discussed several applications of our results including probabilities, Markov chains, and Krohn-Rhodes decompositions.
{In Sections~3~and~5, we have already outlined the following future work:
\begin{itemize}
  \item In the decomposition given by Theorem~\ref{th-main}, is it possible to restrict the first component $\TK^{C_{P_1}\times \cdots \times C_{P_n}}$ to a simpler monoid?
  \item Does the partial decomposition given in Theorem~\ref{th-main2} lead to the minimal Krohn-Rhodes decomposition of the syntactic monoid?
\end{itemize}
We believe that these two problems are inherently related.}

In addition, further investigation is needed regarding the relation between our decomposition and the holonomy decomposition of automata \cite{Eilenberg,Cascade}. As mentioned in Section~1, the holonomy decomposition is known as a decomposition of automata, and is closely related to the Krohn-Rhodes decomposition. Therefore, we believe that there exists a connection between the decomposition of syntactic monoids we provided and the holonomy decomposition.

Finally, extending the definition of periods is also left for future work.
For example, each non-empty subset $\Gamma\subseteq \Sigma$ can be regarded as a {\it code} because every word in $\Gamma^*$ has a unique factorization in {\it codewords} in $\Gamma$ (see e.g., \cite{code} in detail).
Therefore, the definition of periods could be extended by the number of occurrences of codewords in $\Gamma$ with another code $\Gamma\subseteq \Sigma^*$.

\end{document}